\documentclass[11pt]{article}

\usepackage[T1]{fontenc}
\usepackage[utf8]{inputenc}
\usepackage{newpxtext}      
\usepackage{iftex}          
\ifPDFTeX                   
  \usepackage[activate={true,nocompatibility},final]{microtype}
\else                       
  \usepackage[protrusion=true,final]{microtype}
\fi

\usepackage[margin=1in]{geometry}

\usepackage{authblk}                    
\usepackage{etoolbox}
\forcsvlist{\listadd\PaperStatementEnvs}{theorem,lemma,proposition,corollary,definition,remark}
\newcommand{\PaperKeepStatementHead}[1]{%
  \BeforeBeginEnvironment{#1}{\Needspace{4\baselineskip}}}
\forlistloop{\PaperKeepStatementHead}{\PaperStatementEnvs}
\makeatletter
\patchcmd{\@maketitle}{\vskip 2em}{\vskip 0pt}{}%
  {\PackageError{paper}{Could not adjust title top spacing}{}}
\patchcmd{\@maketitle}{\vskip 1.5em}{\vskip 1em}{}%
  {\PackageError{paper}{Could not adjust title-author spacing}{}}
\patchcmd{\@maketitle}{\vskip 1.5em}{\vskip 0pt}{}%
  {\PackageError{paper}{Could not adjust title-abstract spacing}{}}
\patchcmd{\@maketitle}{{\large \@date}}{{\normalsize \@date}}{}%
  {\PackageError{paper}{Could not adjust the date size}{}}
\makeatother

\usepackage{amsmath}
\usepackage{amsthm}         
\usepackage{amssymb,amsfonts}
\usepackage{mathtools}      
\usepackage{newpxmath}      
\usepackage{bm}             
\usepackage{nicefrac}
\allowdisplaybreaks         

\usepackage{thm-restate}    
\usepackage{needspace}      

\usepackage{graphicx}
\graphicspath{{figures/}{./}}                       
\usepackage{booktabs}                               
\usepackage{multirow,makecell,array}
\usepackage[font=small,labelfont=bf]{caption}       
\usepackage{subcaption}                             
\usepackage{algorithm}
\usepackage{algpseudocode}                          

\usepackage{enumitem}
\setlist[itemize]{leftmargin=2.2em,itemsep=2pt,topsep=2pt}
\setlist[enumerate]{leftmargin=2.2em,itemsep=2pt,topsep=2pt}

\usepackage{xcolor}
\definecolor{LinkColor}{rgb}{0.10,0.40,0.75}        
\definecolor{CiteColor}{rgb}{0.70,0.25,0.20}        
\definecolor{UrlColor} {rgb}{0.20,0.50,0.50}        
\definecolor{TodoColor}{rgb}{0.80,0.30,0.10}        

\usepackage{tikz}
\usetikzlibrary{positioning,calc,arrows.meta}

\usepackage[round,sort&compress]{natbib}

\usepackage{url}
\usepackage{hyperref}
\hypersetup{
  colorlinks=true,
  linkcolor=LinkColor,
  citecolor=CiteColor,
  urlcolor=UrlColor,
  breaklinks=true,
  bookmarksnumbered=true,
  hypertexnames=false,       
}
\usepackage{bookmark}                               
\usepackage[capitalise,nameinlink,noabbrev,sort&compress]{cleveref}  

\numberwithin{equation}{section}

\newif\ifdraft \draftfalse
\ifdraft
  \usepackage{lineno}\linenumbers
  \newcommand{\todo}[1]{\textcolor{TodoColor}{\textbf{[TODO:}~#1\textbf{]}}}
\else
  \newcommand{\todo}[1]{}                            
\fi

\newcommand{\N}{\mathbb{N}}

\DeclarePairedDelimiterX{\inner}[2]{\langle}{\rangle}{#1,#2}   

\newcommand{\X}{\mathcal X}
\newcommand{\cH}{\mathcal H}
\newcommand{\cK}{\mathcal K}
\newcommand{\fin}[1]{[#1]^{<\omega}}
\newcommand{\infinite}[1]{[#1]^\omega}
\newcommand{\ew}{\varepsilon}
\DeclareMathOperator{\cont}{c}
\DeclareMathOperator{\wcost}{cost}
\DeclareMathOperator{\code}{code}
\newcommand{\activefamily}{\mathcal A_T}
\newcommand{\closure}[2]{\operatorname{cl}_{#1}(#2)}

\newcommand{\spwidth}{\mathfrak{s}}
\newcommand{\Core}{\operatorname{core}}

\theoremstyle{plain}
\newtheorem{theorem}{Theorem}[section]
\newtheorem{proposition}[theorem]{Proposition}
\newtheorem{lemma}[theorem]{Lemma}
\newtheorem{corollary}[theorem]{Corollary}

\theoremstyle{definition}
\newtheorem{definition}[theorem]{Definition}

\theoremstyle{remark}
\newtheorem{remark}[theorem]{Remark}

\crefname{theorem}{Theorem}{Theorems}          \Crefname{theorem}{Theorem}{Theorems}
\crefname{proposition}{Proposition}{Propositions}
\Crefname{proposition}{Proposition}{Propositions}
\crefname{lemma}{Lemma}{Lemmas}                \Crefname{lemma}{Lemma}{Lemmas}
\crefname{corollary}{Corollary}{Corollaries}   \Crefname{corollary}{Corollary}{Corollaries}
\crefname{conjecture}{Conjecture}{Conjectures} \Crefname{conjecture}{Conjecture}{Conjectures}
\crefname{fact}{Fact}{Facts}                   \Crefname{fact}{Fact}{Facts}
\crefname{definition}{Definition}{Definitions} \Crefname{definition}{Definition}{Definitions}
\crefname{assumption}{Assumption}{Assumptions} \Crefname{assumption}{Assumption}{Assumptions}
\crefname{example}{Example}{Examples}          \Crefname{example}{Example}{Examples}
\crefname{problem}{Problem}{Problems}          \Crefname{problem}{Problem}{Problems}
\crefname{remark}{Remark}{Remarks}             \Crefname{remark}{Remark}{Remarks}
\crefname{claim}{Claim}{Claims}                \Crefname{claim}{Claim}{Claims}
\crefname{algorithm}{Algorithm}{Algorithms}    \Crefname{algorithm}{Algorithm}{Algorithms}

\title{Characterizing Language Generation in the Limit:\\
Finite Witnesses and a Separation-Width Hierarchy}

\author[1]{Xiaoyu Li}
\author[2]{Andi Han}
\author[1]{Jiaojiao Jiang}
\author[2]{Junbin Gao}
\affil[1]{\makebox[14em][l]{University of New South Wales}\quad
\makebox[22em][l]{\texttt{\{xiaoyu.li2,jiaojiao.jiang\}@unsw.edu.au}}}
\affil[2]{\makebox[14em][l]{University of Sydney}\quad
\makebox[22em][l]{\texttt{\{andi.han,junbin.gao\}@sydney.edu.au}}}
\date{September 10, 2026}
\hypersetup{
  pdftitle={Characterizing Language Generation in the Limit: Finite Witnesses and a Separation-Width Hierarchy},
  pdfauthor={Xiaoyu Li, Andi Han, Jiaojiao Jiang, Junbin Gao},
  pdfsubject={Generation in the limit, positive separation, the complete width hierarchy, and dimension obstructions}
}

\begin{document}
\maketitle

\begin{abstract}
Language generation in the limit asks for valid unseen elements from every
exhaustive positive presentation of an unknown infinite language. We
characterize this task for arbitrary families over a countable universe.
Generation is possible exactly when each target can be assigned a finite
positive witness so that the targets activated by any finite sample have
an infinite common intersection. The necessary direction follows from a
universal normalization: a search through unconfirmed histories converts any
successful generator into one depending only on the observed set.
We then ask how large compatible witnesses must be. Positive separation
width records the smallest uniform size bound, with two further levels for
unbounded finite witnesses and the absence of any compatible finite-witness assignment.
Every level occurs. Countable families admit singleton witnesses, explicit
families realize every finite width, and a union of two families with
infinite common cores requires unbounded finite witnesses. Finally,
countable-support and finite-profile obstructions explain why local
combinatorial data cannot determine generation in the limit. The characterization
and full width hierarchy are checked in Lean, including the simplified
normalization and a direct diagonal capture lemma.
The accompanying Lean development is maintained at
\url{https://github.com/xiaoyulics/language-generation-characterization}.

\end{abstract}

\begingroup
\setlength{\intextsep}{4pt}
\begin{figure}[H]
\centering
\begingroup
\definecolor{HeroInk}{HTML}{19344B}
\definecolor{HeroBlue}{HTML}{2C617D}
\definecolor{HeroTeal}{HTML}{187A7B}
\definecolor{HeroRust}{HTML}{A64D32}
\begin{tikzpicture}[
  x=1cm,y=1cm,
  font=\fontsize{9}{10.5}\selectfont,
  text=HeroInk,
  line cap=round,line join=round,
  >={Stealth[length=3.5pt,width=4pt]},
  head/.style={font=\fontsize{10.5}{12}\selectfont\bfseries},
  flow/.style={draw=HeroBlue,line width=.75pt,->},
  token/.style={circle,draw=HeroBlue!65,fill=HeroBlue!5,
    minimum size=5.2mm,inner sep=0pt},
  level/.style={fill=HeroTeal!5,inner sep=2pt,
    font=\fontsize{12}{13}\selectfont,text=HeroTeal}
]
\path[use as bounding box] (0,0) rectangle (16.4,6.2);

\node[head] at (3.35,5.95) {Generation in the limit};
\foreach \x/\letter in {.45/a,1.08/b,1.71/a,2.34/c}
  \node[token] at (\x,5.02) {$\letter$};
\node at (2.94,5.02) {$\cdots$};
\draw[flow] (3.27,5.02) -- (3.72,5.02);
\node[draw=HeroBlue,line width=.8pt,rounded corners=2pt,
  fill=HeroBlue!4,minimum width=12mm,minimum height=8mm,
  align=center,font=\fontsize{9}{10}\selectfont] (gen) at (4.39,5.02)
  {Generator\\$G$};
\draw[flow] (5.04,5.02) -- (5.60,5.02);
\node[text=HeroTeal,font=\fontsize{19}{20}\selectfont] at (5.95,5.03) {$\star$};
\node at (3.30,4.31) {Eventually $G(x_{1:t})\in L\setminus S_t$};
\node[font=\fontsize{8.5}{10}\selectfont] at (3.30,3.91)
  {Every target, every exhaustive positive text};

\node[font=\fontsize{25}{26}\selectfont,text=HeroBlue]
  at (7.35,5.02) {$\Longleftrightarrow$};

\node[head] at (12.35,5.95) {Compatible finite witnesses};
\begin{scope}[shift={(11.72,5.00)}]
  \begin{scope}
    \clip[rotate=14] (0,0) ellipse[x radius=1.77cm,y radius=.56cm];
    \clip[rotate=-14] (0,0) ellipse[x radius=1.77cm,y radius=.56cm];
    \clip (0,0) ellipse[x radius=1.86cm,y radius=.66cm];
    \fill[HeroTeal!7] (-2,-1) rectangle (2,1);
  \end{scope}
  \draw[HeroBlue!85,line width=.75pt,rotate=14]
    (0,0) ellipse[x radius=1.77cm,y radius=.56cm];
  \draw[HeroTeal,line width=.75pt,rotate=-14]
    (0,0) ellipse[x radius=1.77cm,y radius=.56cm];
  \draw[HeroBlue!55,line width=.75pt]
    (0,0) ellipse[x radius=1.86cm,y radius=.66cm];
  \node[font=\fontsize{8}{9}\selectfont] at (-2.04,.30) {$L_1$};
  \node[font=\fontsize{8}{9}\selectfont] at (1.92,.40) {$L_2$};
  \node[font=\fontsize{8}{9}\selectfont] at (1.98,-.43) {$L_3$};
  \draw[HeroInk!55,rounded corners=3pt,line width=.6pt,fill=white]
    (-.93,-.16) rectangle (-.10,.16);
  \foreach \x in {-.72,-.51,-.30}
    \fill[HeroInk] (\x,0) circle[radius=1.35pt];
  \draw[HeroTeal,line width=.7pt] (-.72,0) circle[radius=2.8pt];
  \node[font=\fontsize{8.5}{9}\selectfont] at (-.50,-.34) {$S$};
  \node[text=HeroTeal,font=\fontsize{19}{20}\selectfont] at (.28,.01) {$\star$};
  \node[text=HeroTeal,font=\fontsize{11}{12}\selectfont] at (.84,0) {$\cdots$};
\end{scope}
\node[align=left,text width=21mm,font=\fontsize{8.5}{10}\selectfont]
  at (15.09,5.00) {Infinite\\shared\\intersection};
\draw[HeroInk!45,line width=.5pt,<-] (12.95,5.00) -- (13.96,5.00);
\node at (12.28,3.91) {$T(L)\subseteq S\subseteq L$ activates $L$};

\draw[HeroInk!18,line width=.5pt] (.1,3.47) -- (16.3,3.47);
\node[head,anchor=west] at (.10,3.08) {Positive separation width $\spwidth(\cH)$};
\node[font=\fontsize{8.5}{10}\selectfont,anchor=east] at (16.30,3.08)
  {Every level occurs};
\fill[HeroTeal!5,rounded corners=3pt] (.10,.08) rectangle (11.98,2.62);
\fill[HeroRust!5,rounded corners=3pt] (12.27,.08) rectangle (16.30,2.62);
\node[text=HeroTeal,font=\fontsize{9}{10}\selectfont\bfseries]
  at (6.02,2.24) {Generation possible};
\node[text=HeroRust,font=\fontsize{9}{10}\selectfont\bfseries]
  at (14.29,2.24) {Impossible};
\draw[HeroTeal!65,line width=.8pt] (.75,1.53) -- (10.20,1.53);
\foreach \x/\value in {.85/0,2.35/1,3.85/2,5.35/3,7.05/\cdots,10.12/\omega}
  \node[level] at (\x,1.53) {$\value$};
\draw[HeroInk!22,densely dotted,line width=.6pt] (8.13,.30) -- (8.13,1.92);
\node[level,text=HeroRust,fill=HeroRust!5] at (14.29,1.53) {$\omega+1$};
\node[align=center] at (4.05,.69) {Uniformly bounded\\finite witnesses};
\node[align=center] at (10.06,.69) {Finite witnesses;\\no uniform bound};
\node[align=center] at (14.29,.69) {No compatible\\finite witnesses};
\end{tikzpicture}
\endgroup
\caption{\textbf{From generation to finite witnesses, then to witness size.}
Generation is possible exactly when one finite-witness assignment makes
each nonempty family of active targets have infinite full intersection,
at every finite sample $S$. Activation means $T(L)\subseteq S\subseteq L$.
In the upper sketch, dots form the observed set $S$ ($S_t$ at time $t$),
rings mark witness points, and stars mark unseen outputs. Below, every
width occurs; $\omega$ permits finite witnesses with no uniform size bound.}
\label{fig:overview}
\end{figure}
\endgroup

\clearpage
\pdfbookmark[0]{Contents}{contents}
\begingroup
\small
\makeatletter
\patchcmd{\l@section}{1.0em}{0.55em}{}%
  {\PackageError{paper}{Could not compact contents spacing}{}}
\makeatother
\tableofcontents
\endgroup
\clearpage

\section{Introduction}
\label{sec:intro}

A generator observes positive examples from an unknown language and must
produce another valid example. In generation in the limit, the observations
arrive in an arbitrary order and eventually enumerate the entire infinite
target. Success means that, after some finite time, every output belongs to
the target and has not appeared among the inputs. There is no feedback
about whether an output is correct.

\citet{kleinberg2024generation} showed that every countable family of infinite
languages admits such a generator, even when identifying the target is
impossible. For arbitrary families over a countable universe,
\citet{raman2025generation} characterized uniform generation by finite closure
dimension and non-uniform generation by increasing countable covers of finite
closure dimension. For ordinary generation, whose convergence time may depend
on the entire presentation, they gave sufficient conditions and asked for a
complete characterization.

Our first result answers this question by identifying the finite positive
evidence that makes generation possible. Our second result studies the size
of that evidence and shows why pointwise finiteness cannot be replaced by
any uniform finite bound.

\paragraph{Finite witnesses: when generation is possible.}
Assign each target $L$ a finite set $T(L)\subseteq L$. At a finite sample
$S$, call $L$ \emph{active} if $T(L)\subseteq S\subseteq L$: the sample is
consistent with $L$ and has already revealed its witness.
\Cref{thm:characterization} states that generation in the limit is possible
exactly when one assignment makes the full intersection of every nonempty
active family infinite.

The sufficient direction explains the condition. Every target eventually
becomes active on each of its texts. From then on, any fresh point common
to all active targets is a valid output. Several targets may remain
indistinguishable indefinitely; generation needs only a common source of
new elements.

The necessary direction must handle a generator's dependence on order and
repetitions. \Cref{thm:normalization} gives a universal transformation from
any sequence-input generator $G$ to a set-input generator $g$, preserving
every infinite target on which $G$ succeeds. The new generator searches
histories whose outputs have not yet been observed. Finitely many positive
confirmations force this search to follow a canonical finite chain of genuine
errors and then pass beyond it. This produces witnesses satisfying
\[
 T(L)\subseteq S\subseteq L,\qquad S\text{ finite}
 \quad\Longrightarrow\quad g(S)\in L\setminus S.
\]
The construction uses no knowledge of the target and no correctness oracle.
With an effective universe, this locking conclusion also characterizes
computable generation of a single target by the existence of an infinite
computably enumerable subset; see Appendix~\ref{app:computability}.

\paragraph{Witness size: a complete separation-width hierarchy.}
Call a nonempty subfamily \emph{bad} when its full common intersection is
finite. The witness condition is equivalent to requiring that each bad
subfamily contain $L,K$ with $T(L)\not\subseteq K$. We call this
\emph{positive separation}.

Positive separation width records the least uniform finite bound on the
sizes of separating witnesses, when one exists. Otherwise it is $\omega$
if a pointwise finite assignment exists, and $\omega+1$ if none exists.
Here $\omega$ is the first infinite ordinal and $\omega+1$ its successor;
\cref{sec:separation} explains the scale. The core characterization has the
immediate numerical form
\[
 \cH\text{ generatable in the limit}
 \quad\Longleftrightarrow\quad \spwidth(\cH)\le\omega.
\]

The additional content is the hierarchy in \cref{thm:hierarchy}.
Every countable family admits distinct singleton witnesses. Nevertheless,
explicit families have arbitrarily large finite width, with matching upper
and lower bounds. A family of languages containing either of two fixed
disjoint infinite blocks has width $\omega$. Thus even a union of two
classes with infinite common cores can require unbounded finite witnesses.
These are witness-size statements; an arbitrary text may delay a designated
witness for as long as it wishes.

The finite-level lower bounds must allow witnesses to depend on entire
targets, including their infinite tails. A direct diagonal capture lemma
handles this dependence before a finite incidence count finishes the proof.
The same lemma yields forced witness growth along an explicit chain of
targets in the width-$\omega$ example.

\paragraph{Why local dimensions miss the distinction.}
The compatibility of witnesses is global. Every countable family is
generatable, but the family of all infinite subsets is not. Consequently,
a dimension whose infinity is always supported by a countable subclass
cannot characterize generation in the limit by finiteness. Even the complete
finite trace and positive-closure profiles fail to distinguish generatable
cofinite targets from the class of all infinite targets.
\Cref{sec:barriers} makes these limitations precise without excluding
invariants that retain additional global structure.

\subsection{Relation to earlier work}

The positive-data setting originates in language identification in the limit
\citep{gold1967identification}. Finite tell-tales are central to
\citet{angluin1980inference}'s characterization for indexed recursive-language
families. Our witnesses have a different compatibility requirement: several
active targets may remain indistinguishable if their intersection supplies
infinitely many outputs. A witness need not identify a target or exclude all
its proper sublanguages.
For a broader account of language generation and its variants, see
\citet[Part~II]{mehrotra2026thesis}.

Set-drivenness and locking normal forms have also been studied in language
identification \citep{kotzing2017normal}. Our normalization concerns the
single-element generation criterion and uses the test of whether a simulated
output has appeared among the observations. Sorting the inputs before
querying an arbitrary successful generator does not suffice;
\cref{app:sorting} gives a counterexample.

The two-core class is a union of two classes with infinite fixed common
subsets, so its generatability also follows from the finite-union sufficient
condition of \citet[Corollary~14]{raman2025generation}. Its role here is the necessity of
unbounded finite witnesses. Our characterization also recovers the
countable-family and increasing-EUC-cover sufficient conditions by direct
assignments in \cref{sec:consequences}. These recoveries establish existence;
the singleton bound and the hierarchy then quantify the required evidence.
Arbitrary unions require care:
\citet{hanneke2025union} and \citet{bai2026noise} give a uniformly generatable
class and a non-uniformly generatable class whose union is not generatable.
Assignments that work separately need not stay valid when targets from both
classes become active at one sample.

Other characterizations place requirements on the range of generated outputs.
\citet{kalavasis2026breadth} characterize generation with several notions of
breadth, while \citet{kleinberg2025density} quantify coverage using density.
These questions concern how much of a target is generated. Separation width
instead measures the size of positive evidence compatible across targets;
the infinite-intersection condition guarantees a supply of fresh elements
without prescribing their density.

The observation model also matters. \citet{kleinberg2026partial} study partial
enumeration, where only an infinite subset of the target is revealed.
\citet{li2026contrastive} study unordered pairs with opposite labels and
characterize uniform contrastive generation by a contrastive closure
dimension. In the nested model of \citet{li2026flood}, a verifier recognizes
an ambient formal language while the examples reveal part of an unknown
valuable sublanguage; validity and valuable coverage then become separate
requirements. Our finite witnesses use exhaustive positive observations of
individual target elements. Applying the same idea to these other observation
models would require a corresponding notion of witness compatibility.

We define the model and state the characterization in \cref{sec:prelim},
prove normalization and the characterization in
\cref{sec:normalization,sec:characterization-proof}, and recover known
sufficient conditions in \cref{sec:consequences}. We then develop separation
width and its hierarchy in \cref{sec:separation,sec:hierarchy}.
\Cref{sec:barriers,sec:discussion} explain the structural and computational
boundaries. The appendices give a complete normalization example, stronger
witness divergence, the padding obstruction, and the formalization account.
Appendices~\ref{app:quantifiers} and~\ref{app:computability} explain the
quantifier distinctions and develop the computable-generation consequences.

\section{The model and the characterization}
\label{sec:prelim}

Let $\X$ be a countably infinite set. Write $\fin{\X}$ for its finite subsets,
including the empty set, and $\infinite{\X}$ for its infinite subsets. A
\emph{language family} is any $\cH\subseteq\infinite{\X}$. No effectiveness,
cardinality, or descriptive-set-theoretic assumption is imposed on $\cH$.
The set of finite words over $\X$ is $\X^{<\omega}$, with empty word $\ew$.
For a word $\sigma=(\sigma_1,\ldots,\sigma_m)$, its \emph{content} is the set
$\cont(\sigma)=\{\sigma_i:1\le i\le m\}$ of elements that occur in the word,
with $\cont(\ew)=\varnothing$.
Content ignores order and repetition; for example,
$\cont((a,b,a))=\{a,b\}$.
We write $\sigma\prec\tau$ for strict prefix and $|\sigma|$ for word length.
Our conventions are $\N=\{1,2,\ldots\}$ and $\N_0=\N\cup\{0\}$.

A \emph{text} for $L$ is an infinite sequence $x=(x_t)_{t\ge1}$ with
$\{x_t:t\ge1\}=L$. Repetitions and arbitrarily long delays are permitted.
Let $x_{1:t}=(x_1,\ldots,x_t)$ and $S_t=\cont(x_{1:t})$.

\begin{definition}[Generation in the limit]\label{def:ordinary}
A total function $G:\X^{<\omega}\to\X$ \emph{generates $L$ in the limit} if
for every text $x$ for $L$ there is a time $t_0=t_0(L,x)$ such that
\begin{equation}\label{eq:ordinary}
  G(x_{1:t})\in L\setminus S_t
  \qquad\text{for every }t\ge t_0.
\end{equation}
The family $\cH$ is \emph{generatable in the limit} if one total $G$ has this
property for every $L\in\cH$. A generator is \emph{set-driven} if it has the
form $G(\sigma)=g(\cont(\sigma))$ for some $g:\fin{\X}\to\X$.
\end{definition}

The output is a single element. Freshness refers to the observed inputs,
not to earlier outputs. The generator has no target-membership oracle and
receives no correctness feedback. We work with arbitrary total functions;
computability is discussed separately in \cref{cor:computable} and
Appendix~\ref{app:computability}.
Requiring freshness on every finite input, rather than only eventually on
each text, gives the same existence notion by the repair in
\cref{eq:fresh-repair}.

\begin{definition}[Finite-witness condition]\label{def:witness}
A \emph{witness assignment} for $\cH$ is a function
$T:\cH\to\fin{\X}$ with $T(L)\subseteq L$ for every $L\in\cH$.
At a finite set $S\subseteq\X$, define
\begin{equation}\label{eq:active}
  \activefamily(S)=\{L\in\cH:T(L)\subseteq S\subseteq L\}.
\end{equation}
The assignment satisfies the \emph{finite-witness condition} if
\begin{equation}\label{eq:witness}
  \activefamily(S)\ne\varnothing
  \quad\Longrightarrow\quad
  \left|\bigcap_{L\in\activefamily(S)}L\right|=\infty
  \qquad\text{for every }S\in\fin{\X}.
\end{equation}
\end{definition}

\begin{theorem}[Finite-witness characterization]\label{thm:characterization}
For every countably infinite $\X$ and every
$\cH\subseteq\infinite{\X}$, the following are equivalent:
\begin{enumerate}[label=\textup{(\roman*)}]
  \item $\cH$ is generatable in the limit.
  \item $\cH$ is generatable in the limit by a set-driven generator.
  \item $\cH$ admits a witness assignment satisfying \eqref{eq:witness}.
\end{enumerate}
\end{theorem}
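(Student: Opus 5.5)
The plan is to prove the cycle (i)$\Rightarrow$(ii)$\Rightarrow$(iii)$\Rightarrow$(i). The implication (ii)$\Rightarrow$(i) needs no argument, since a set-driven generator is a generator. Most of the work lies in (i)$\Rightarrow$(ii). That step is the universal normalization of \cref{thm:normalization}, which is stated in the introduction and proved in \cref{sec:normalization}. So the proof of the theorem itself reduces to two lemmas: a locking lemma for set-driven generators, and an iteration argument.

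\emph{(iii)$\Rightarrow$(i), and in fact (iii)$\Rightarrow$(ii).} Fix $T$ satisfying \eqref{eq:witness}. Define $g:\fin{\X}\to\X$ as follows. If $\activefamily(S)\neq\varnothing$, let $g(S)$ be an element of $\bigcap_{L\in\activefamily(S)}L\setminus S$; this exists because the intersection is infinite and $S$ is finite. Otherwise, let $g(S)$ be arbitrary. Now take $L\in\cH$ and a text $x$ for $L$. Since $T(L)$ is finite and contained in $L$, there is $t_0$ with $T(L)\subseteq S_t$ for all $t\ge t_0$. Also $S_t\subseteq L$ always holds. Hence $L\in\activefamily(S_t)$, so $g(S_t)\in L\setminus S_t$. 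This shows $G(\sigma)=g(\cont(\sigma))$ is a set-driven generator for $\cH$.

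\emph{(ii)$\Rightarrow$(iii).} Let $g$ be set-driven and successful on $\cH$. I first prove a locking lemma: for each $L\in\cH$ there is a finite $T(L)\subseteq L$ such that
\[
T(L)\subseteq S\subseteq L,\ S\text{ finite}\quad\Longrightarrow\quad g(S)\in L\setminus S.
\]
Suppose no such set exists, and enumerate $L=\{a_1,a_2,\dots\}$. Put $F_0=\varnothing$. Given $F_k$, the set $F_k\cup\{a_{k+1}\}$ is not a locking set, so there is a finite $F_{k+1}$ with $F_k\cup\{a_{k+1}\}\subseteq F_{k+1}\subseteq L$ and $g(F_{k+1})\notin L\setminus F_{k+1}$. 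Now build a text that lists $F_1$, then $F_2\setminus F_1$, and so on. Its content is exactly $L$, and at infinitely many times the observed set is some $F_k$, where $g$ fails. This contradicts success.

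Next, given $S$ with $\activefamily(S)\neq\varnothing$, iterate $S_0=S$ and $S_{k+1}=S_k\cup\{g(S_k)\}$. By induction, every active $L$ satisfies $T(L)\subseteq S_k\subseteq L$, so $g(S_k)\in L\setminus S_k$. Thus the outputs $g(S_k)$ are pairwise distinct, since each is fresh relative to all earlier ones, and all of them lie in every active $L$. The intersection is therefore infinite, which is \eqref{eq:witness}.

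\emph{(i)$\Rightarrow$(ii)} is the main obstacle. A general $G$ depends on order and repetition, and sorting does not work (\cref{app:sorting}). I would invoke \cref{thm:normalization} directly. If it must be proved here, the approach is to define $g(S)$ by a canonical search over histories $\sigma$ with $\cont(\sigma)\subseteq S$ whose simulated outputs $G(\sigma)$ have not yet appeared in $S$. A confirmed output, one lying in $S$, is evidence of an error only when it is stale, and this test triggers the search to extend past such histories. The key difficulty is showing that, on any target where $G$ succeeds, this search follows only finitely many genuine errors and then stabilizes. I would do this by a locking-sequence argument for $G$ in the style of Blum--Blum, used to bound the chain of errors the search can follow.
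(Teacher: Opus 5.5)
Your proof is correct and rests on the same foundation as the paper's: \cref{thm:normalization} does the heavy lifting, and your (iii)$\Rightarrow$(ii) construction is exactly the paper's. The middle of the argument is organized differently. The paper proves (i)$\Rightarrow$(iii) directly, because normalization already delivers one $g$ \emph{together with} locking witnesses as in \eqref{eq:locking}. Infinitude of the active core $C$ then follows in one step: a finite $C$ would satisfy $T(L)\subseteq S\subseteq C\subseteq L$ for every active $L$, and locking at the single input $C$ would force $g(C)\in C\setminus C$.

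You use normalization only for its set-driven conclusion. You then re-derive locking for an arbitrary successful set-driven generator by a Blum--Blum-style construction. That construction is sound, and it works precisely because failure at $F_{k+1}$ depends only on the set $F_{k+1}$, not on how it was listed. You obtain infinitude by iterating $g$ from $S$, which is the same bootstrap recursion the paper uses later in \eqref{eq:bootstrap}. Your route is more modular. It proves (ii)$\Leftrightarrow$(iii) without any reference to normalization, so normalization is needed only for (i)$\Rightarrow$(ii), and any transformation producing \emph{some} successful set-driven generator would suffice. The paper's route avoids proving locking twice and replaces the iteration by a single contradiction at $C$.

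Do not rely on your fallback sketch of normalization as written. This does not affect your proof, since \cref{thm:normalization} is established independently in \cref{sec:normalization-proof}. The paper's search does not stabilize: the simulated histories may keep changing, as \cref{app:worked-normalization} shows. A locking sequence for $G$ alone also does not tell a set-driven search which history to simulate. The search selects histories whose outputs are \emph{unconfirmed}, i.e.\ outside $S$. What makes it work is the witness \eqref{eq:normalization-witness}, which contains:
\begin{itemize}
  \item the checkpoint $E_{m+1}(L)$;
  \item the content of the last genuine error;
  \item every valid output of a history whose code is smaller than that of some chain element.
\end{itemize}
Once this witness is observed, the sample-based search reproduces the finite genuine-error chain exactly. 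It then performs one further round into the region of terminal safety \eqref{eq:terminal-safety}.
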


The intersection in \eqref{eq:witness} is over all active languages
simultaneously. Even requiring every finite active subfamily to have
infinite intersection does not suffice: Appendix~\ref{app:quantifiers} gives a
nongeneratable class satisfying that weaker condition with empty witnesses.
The empty active family imposes no requirement, so no convention for its
intersection is needed.

A finite witness concerns \emph{which} examples have appeared. Even an
arbitrarily large finite subset of a target may omit a specified witness
element. Thus the theorem does not replace convergence in the limit by a
target-dependent bound on the number of distinct examples. Nor does the
existence of an assignment supply a procedure to compute it or to test the
intersection condition.

\section{Set-driven normalization}
\label{sec:normalization}

The obstacle is that a generator may use the order and repetitions of its
input. We remove this dependence by simulating histories over the observed
set. An output outside that set is \emph{unconfirmed}: it may be an error,
or a valid point that has not yet appeared. Positive observations eventually
eliminate the latter possibility for any fixed finite collection of histories.

\begin{theorem}[Universal normalization]\label{thm:normalization}
For every total $G:\X^{<\omega}\to\X$, there is a total
$g:\fin{\X}\to\X$ such that every infinite language $L$ generated by $G$
has a finite $T_L\subseteq L$ satisfying
\begin{equation}\label{eq:locking}
 T_L\subseteq S\subseteq L,\qquad S\text{ finite}
 \quad\Longrightarrow\quad g(S)\in L\setminus S.
\end{equation}
The same $g$ works for all such $L$ and uses only finitely many evaluations
of $G$ on each input $S$.
\end{theorem}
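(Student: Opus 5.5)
The plan has two parts: a Blum--Blum locking argument for each generated target, and then a set-driven search that recovers the lock from the finite set $S$ alone.

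Fix an enumeration $\code:\X^{<\omega}\to\N_0$ of all words, used for every tie-break. For a finite $S$, let $\sigma^S$ be $\sigma$ followed by the elements of $S$ listed in code order. This appended suffix is what will turn freshness with respect to $\cont(\sigma)$ into freshness with respect to $S$.

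\emph{Step 1 (locking words exist).} Let $L$ be an infinite language generated by $G$. Call $\sigma$ with $\cont(\sigma)\subseteq L$ an \emph{$L$-lock} if $G(\tau)\in L\setminus\cont(\tau)$ for every $\tau\succeq\sigma$ with $\cont(\tau)\subseteq L$. I would build the canonical error chain $c_0=\ew$ and
\[
c_{k+1}=\text{the code-least }\tau\succeq c_k\cdot\ell_{k+1}\text{ with }\cont(\tau)\subseteq L\text{ and }G(\tau)\notin L\setminus\cont(\tau),
\]
where $\ell_1,\ell_2,\ldots$ enumerates $L$. If this chain were infinite, its union would be a text for $L$, since the interleaved $\ell_k$ make it exhaustive. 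On that text $G$ would err infinitely often, contradicting \cref{def:ordinary}. So the chain stops at some $c_m$, and $c_m$ is an $L$-lock. In particular, for every finite $S$ with $\cont(c_m)\subseteq S\subseteq L$ we get $G(c_m^S)\in L\setminus S$.

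\emph{Step 2 (the set-driven search).} Define $g(S)$ by running the chain construction with $S$ in place of $L$. A word $\tau$ with $\cont(\tau)\subseteq S$ is \emph{flagged at $S$} if $G(\tau)\in\cont(\tau)$ (a visible error) or $G(\tau)\notin S$ (unconfirmed). Starting from $\rho_0=\ew$, the search repeatedly moves to the code-least flagged extension whose code is at most a bound $B(S)$. When it stops at $\rho$, it outputs $G(\rho^S)$. If that value lies in $S$, it is repaired by outputting the code-least element outside $S$, which is the fresh-repair of \cref{eq:fresh-repair}. Because only finitely many words are examined, $g$ is total and uses finitely many evaluations of $G$ on each input $S$.

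\emph{Step 3 (choice of $T_L$).} I would let $T_L$ contain $\cont(c_0),\ldots,\cont(c_m)$, together with every output $G(\tau)$ that lies in $L$ for the finitely many words $\tau$ with $\cont(\tau)\subseteq L$ whose code is at most $\max_k\code(c_k)$. With this choice, once $T_L\subseteq S\subseteq L$, every word the search inspects before reaching $c_m$ is flagged at $S$ exactly when it is a genuine $L$-error. Genuine errors, whose outputs lie outside $L$, stay unconfirmed forever. Correct outputs among these words have been observed, because they belong to $T_L$. Hence the search reproduces $c_0,\ldots,c_m$, and Step 1 gives $g(S)\in L\setminus S$.

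\emph{Main obstacle.} The delicate point is stopping at the lock. Beyond $c_m$, extensions $\tau$ produce valid outputs in $L$ that may lie outside $S$, and these look exactly like errors under the flagging rule. So the bound $B(S)$ and the stopping rule must keep the search from walking past $c_m$, yet still let it reach $c_m$. This must hold for every $S$ sandwiched between $T_L$ and $L$, not just for a chain of larger and larger sets.

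My first attempt is to make the bound target-controlled through the witness. I would put into $T_L$ the outputs $G(\tau)$ for all $\tau\succeq c_m$ with $\cont(\tau)\subseteq L$ up to a code threshold $N_L$. I would then take $B(S)$ to be the largest $N$ for which all such outputs of extensions of the current chain word, up to code $N$, already lie in $S$. For $S\supseteq T_L$ this forces $B(S)\ge N_L$ and prevents spurious flags from extensions of $c_m$. If this self-referential bound cannot be made canonical, I would instead fall back to a stopping test on the single value $G(\rho^S)\notin S$. I would then show that genuine errors earlier in the chain are eliminated by appending the required elements of $T_L$.
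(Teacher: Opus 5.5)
Your Steps~1 and~3 match the paper's genuine-error chain and confirming witness. However, the proposal never supplies the ingredient the theorem turns on: a set-driven rule for ending the search. As written, the search halts when no flagged extension of code at most $B(S)$ remains, and $B(S)$ is never defined. Your scheme cannot do without some bound. Appending a listing of $S$ always yields a flagged extension, so an unbounded search never halts. The self-referential bound you propose depends on ``the current chain word,'' which is produced by the very search the bound is meant to control. The fallback test $G(\rho^S)\notin S$ accepts any genuine error whose output lies outside $L$. No finite positive sample can expose such an error.

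You have also diagnosed the obstacle backwards. The word $c_m$ is not an $L$-lock. For $m\ge1$ it was itself chosen as a genuine error. In every case, your stopping condition certifies only extensions of $c_m\cdot\ell_{m+1}$, whereas $c_m^S$ continues with a listing of $S$ that need not begin with $\ell_{m+1}$. A search that halts exactly at $c_m$, which is what your bound is meant to enforce, may therefore output an error. Going past $c_m$ is harmless, and in fact necessary.

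The paper drops both the code bound and the stopping test and performs exactly $|S|$ rounds. Round $k$ selects the least-code strict extension over $S$ whose content contains $E_k(S)$ and whose repaired output lies outside $S$. Appending all of $S$ always gives a candidate, so each round is a finite search.

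Including $E_{m+1}(L)$ in $T_L$ does two things. First, it makes $E_j(S)=E_j(L)$ for $j\le m+1$, so the sample search and the true chain use the same checkpoints. In your notation, $\ell$ must come from a fixed enumeration of $\X$, and $\ell_{m+1}$ must lie in the witness; your $T_L$ need not contain it. Second, it forces $|S|\ge m+1$, so at least one round goes beyond the last genuine error. Terminal safety then places the final output in $L$, and the selection rule places it outside $S$.

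Your construction is repaired by the same change. Replace the halting rule by exactly $|S|$ rounds, each extending through the next checkpoint. With that change, your output $G(\rho^S)$ is also correct, since the final $\rho$ extends $c_m\cdot\ell_{m+1}$ within $L$.
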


\subsection{The construction}
\label{sec:construction}

Fix a repetition-free enumeration of $\X$; all minima below refer to this
order. Let $E_k(A)$ be the first $k$ points of $A$ in this order, or all
of $A$ if it has fewer than $k$ points. Thus $|E_k(L)|=k$ for infinite $L$.

Fix also an injective numbering $\code:\X^{<\omega}\to\N_0$, assigning
distinct natural numbers to distinct finite words. Such a numbering exists
because $\X^{<\omega}$ is countable: for example, list its words without
repetition as $w_0,w_1,\ldots$ and set $\code(w_n)=n$.
We call $\code(\sigma)$ the \emph{code} of $\sigma$; ``least-code'' means
having the smallest code among the candidates. Each word has only finitely
many predecessors in this order, since only finitely many natural numbers
are smaller than its code.

Repair outputs already present in the input by setting
\begin{equation}\label{eq:fresh-repair}
 F(\sigma)=
 \begin{cases}
 G(\sigma),&G(\sigma)\notin\cont(\sigma),\\
 \min(\X\setminus\cont(\sigma)),&G(\sigma)\in\cont(\sigma).
 \end{cases}
\end{equation}
Thus $F$ is fresh on every history and eventually agrees with $G$ on each
text where $G$ succeeds. Any fixed fresh fallback can replace the minimum.

Given $S$, start at the empty history. At round $k$, take the least-code
strict extension over $S$ that contains $E_k(S)$ and whose output remains
unconfirmed by $S$. Perform exactly $|S|$ rounds.

\begin{algorithm}[H]
\caption{Set-driven normalization by unconfirmed extensions}
\label{alg:normalization}
\begin{algorithmic}[1]
\Require $S\in\fin{\X}$; $F$ from \eqref{eq:fresh-repair}; fixed enumerations
\State $\sigma\gets\ew$
\For{$k=1,\ldots,|S|$}
 \State Choose the least-code $\tau\in S^{<\omega}$ satisfying
 \Statex \hspace{\algorithmicindent}\hspace{\algorithmicindent}
       $\sigma\prec\tau$, $E_k(S)\subseteq\cont(\tau)$, and $F(\tau)\notin S$
 \State $\sigma\gets\tau$
\EndFor
\State \Return $F(\sigma)$
\end{algorithmic}
\end{algorithm}

Every round has a candidate: append a listing of all of $S$ to the current
history. Its content is exactly $S$, so its $F$-output lies outside $S$.
The listing is nonempty whenever a round is performed. It suffices to
inspect words up to the code of this appended history, so the search
uses only finitely many evaluations. The output is fresh
against the whole sample; for $S=\varnothing$, the loop is empty and returns
$F(\ew)$.

\subsection{Why finite confirmations suffice}
\label{sec:normalization-proof}

\begin{proof}[Proof of \cref{thm:normalization}]
Fix an infinite target $L$ generated by $G$, hence also by $F$.
For the proof only, construct the \emph{genuine-error chain}:
$\sigma_0=\ew$, and at stage $j\ge1$ choose the least-code word satisfying
\begin{equation}\label{eq:genuine-errors}
 \sigma_{j-1}\prec\sigma_j,\qquad
 \cont(\sigma_j)\subseteq L,\qquad
 E_j(L)\subseteq\cont(\sigma_j),\qquad F(\sigma_j)\notin L.
\end{equation}
Stop when no such word exists. An infinite chain would give an exhaustive
text for $L$ with errors at the strictly increasing times $|\sigma_j|$,
contradicting success. Let $m$ be its length. The stopping condition says
\begin{equation}\label{eq:terminal-safety}
 \sigma_m\prec\tau,\quad \cont(\tau)\subseteq L,\quad
 E_{m+1}(L)\subseteq\cont(\tau)
 \quad\Longrightarrow\quad F(\tau)\in L.
\end{equation}

Choose the finite positive witness
\begin{equation}\label{eq:normalization-witness}
 T_L=E_{m+1}(L)\cup\cont(\sigma_m)\cup
 \bigcup_{j=1}^{m}\{F(\tau):\code(\tau)<\code(\sigma_j),\ F(\tau)\in L\}.
\end{equation}
Each code has finitely many predecessors, so this union is finite.
Its first component already ensures $|T_L|\ge m+1$.

Fix any finite $T_L\subseteq S\subseteq L$. Then
$E_j(S)=E_j(L)$ for $j\le m+1$. We claim that the first $m$ rounds of
\cref{alg:normalization} select exactly $\sigma_1,\ldots,\sigma_m$.
Inductively, $\sigma_j$ is an eligible extension of $\sigma_{j-1}$:
its content lies in $\cont(\sigma_m)\subseteq S$, its checkpoint agrees,
and its output lies outside $L$, hence outside $S$.
Any earlier-code candidate $\tau$ with $F(\tau)\in L$ has its output
included in $T_L$ by \eqref{eq:normalization-witness}, contradicting
$F(\tau)\notin S$. Any earlier candidate with $F(\tau)\notin L$ would
contradict the least-code choice in \eqref{eq:genuine-errors}.
This proves the claim for every such $S$.

Since $|S|\ge m+1$, the algorithm performs another round. Its final history
$\rho$ therefore strictly extends $\sigma_m$, lies over $L$, and contains
$E_{m+1}(L)$. Equation~\eqref{eq:terminal-safety} gives $F(\rho)\in L$,
while the final selection gives $F(\rho)\notin S$.
This proves \eqref{eq:locking}, including $m=0$.
The algorithm itself used only $G$ and $S$; the target and its genuine
errors entered only the proof.
\end{proof}

\begin{corollary}[Computability preservation]\label{cor:computable}
With an effective coding of $\X$, a total computable $G$ has a total
computable normalization $g$ as in \cref{thm:normalization}.
\end{corollary}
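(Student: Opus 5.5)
The plan is to check that every ingredient of \cref{alg:normalization} is computable relative to $G$ when $\X$ is effectively coded, say identified with $\N$ through a computable bijection. I fix computable choices for the auxiliary data. The repetition-free enumeration of $\X$ is the coding order itself. The injective numbering $\code$ is a standard computable bijection $\X^{<\omega}\to\N_0$ with computable inverse, such as a Cantor-style pairing of length and entries. Finite sets $S$ are given as canonical finite lists, so $|S|$, membership in $S$, and $E_k(S)$ are all computable.

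Next I verify that $F$ from \eqref{eq:fresh-repair} is total computable. On input $\sigma$, it evaluates $G(\sigma)$ and tests membership in the finite set $\cont(\sigma)$. If the output is in $\cont(\sigma)$, it returns the least element outside $\cont(\sigma)$, which a bounded search finds among the first $|\sigma|+1$ points.

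The main step is to show that each round of the loop is a \emph{bounded} search, so the procedure halts without relying on an unbounded minimization. At round $k$ with current history $\sigma$, the algorithm computes the explicit candidate $\tau^*=\sigma\,\lambda_S$, where $\lambda_S$ is the increasing listing of $S$. Then $\cont(\tau^*)=S$, so $F(\tau^*)\notin S$ by freshness of $F$. This makes $\tau^*$ eligible, and its code $c^*=\code(\tau^*)$ is computable.

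The algorithm then scans codes $n=0,1,\ldots,c^*$ in order. For each $n$ it decodes $\tau=\code^{-1}(n)$ and checks four decidable conditions:
\begin{itemize}
\item $\tau\in S^{<\omega}$;
\item $\sigma\prec\tau$;
\item $E_k(S)\subseteq\cont(\tau)$;
\item $F(\tau)\notin S$, which needs one evaluation of $F$ and hence of $G$.
\end{itemize}
The first code that passes all four is selected. Since $\tau^*$ passes, the scan stops by $c^*$, and the selected word is exactly the least-code candidate. Performing $|S|$ rounds and returning $F(\sigma)$ therefore gives a total computable $g$.

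I do not expect any real obstacle here. The only point needing care is that totality does not follow from existence of a candidate alone; it comes from the explicitly computable witness $\tau^*$, which bounds the search. Finally, \cref{thm:normalization} concerns the function defined by \cref{alg:normalization} for any fixed enumeration and injective numbering. So the locking property \eqref{eq:locking} for this computable $g$ follows directly from \cref{thm:normalization}, with no change to its proof.
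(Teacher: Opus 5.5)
Your proposal is correct and matches the paper's proof: fix effective point and word enumerations, observe that each candidate test is decidable from $S$ with one call to $G$, and bound each least-code search by the code of the append-all history, over exactly $|S|$ rounds. Your details on computing $F$ and on choosing a numbering with computable inverse fill in steps the paper leaves implicit. The eligibility of $\sigma\lambda_S$ also uses that the current history stays over $S$ and that $S\neq\varnothing$ whenever a round runs; you use this without stating it.
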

\begin{proof}
Choose effective point and word enumerations. Each candidate test is
decidable from $S$ and a call to $G$, and each code search terminates by
the append-all argument. There are only $|S|$ rounds.
\end{proof}

Finite positive confirmations force the simulation to follow the
genuine-error chain and pass beyond it. Later histories need not stabilize;
Appendix~\ref{app:worked-normalization} illustrates the mechanism.

\section{From normalization to the characterization}
\label{sec:characterization-proof}

Normalization supplies witnesses with a uniform guarantee over all finite
extensions inside a target. This directly gives the necessary compatibility
between witnesses of different targets.

\begin{proof}[Proof of \cref{thm:characterization}]
For \textup{(i)}$\Rightarrow$\textup{(iii)}, apply
\cref{thm:normalization} to a generator for $\cH$, obtaining one $g$ and
witnesses $T(L)$ satisfying \eqref{eq:locking}. Fix a finite $S$ with
$\activefamily(S)\ne\varnothing$, and put
\[
 C=\bigcap_{L\in\activefamily(S)}L.
\]
If $C$ were finite, every active $L$ would satisfy
$T(L)\subseteq S\subseteq C\subseteq L$. Applying \eqref{eq:locking}
at the \emph{same} input $C$ gives $g(C)\in L\setminus C$ for all these
$L$, forcing $g(C)\in C\setminus C$. Thus $C$ is infinite.
This argument uses the full active family, regardless of its cardinality.

For \textup{(iii)}$\Rightarrow$\textup{(ii)}, define
\begin{equation}\label{eq:generator-from-witnesses}
 g(S)=
 \begin{cases}
 \displaystyle\min\left(\bigcap_{L\in\activefamily(S)}(L\setminus S)\right),
       &\activefamily(S)\ne\varnothing,\\[5pt]
 \min(\X\setminus S),&\activefamily(S)=\varnothing.
 \end{cases}
\end{equation}
The assumed infinitude makes this a total function. On any text for $L$,
its finite witness $T(L)$ eventually appears. Thereafter $L$ remains active,
so every output lies in $L\setminus S$.
Finally, \textup{(ii)}$\Rightarrow$\textup{(i)} is immediate by setting
$G(\sigma)=g(\cont(\sigma))$.
\end{proof}

The same-input argument explains why the common intersection must be
infinite: a finite common core would itself be a legal input beyond all
active witnesses, with no common fresh output left.

\section{Recovering known sufficient conditions}
\label{sec:consequences}

The characterization can be used without first designing a generator:
assign finite positive witnesses and check their active intersections.
We illustrate this with two direct constructions that recover known
sufficient conditions. The first uses the order of a countable list;
the second uses the order of an increasing cover.

\subsection{Countable families}
\label{sec:countable-recovery}

The following recovers the information-theoretic countable-family theorem
of \citet[Theorem~4.1]{kleinberg2024generation}.

\begin{corollary}[Countable families]\label{cor:countable}
Every countable family of infinite languages satisfies the finite-witness
condition and is generatable in the limit.
\end{corollary}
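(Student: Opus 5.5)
Enumerate the countable family as $\cH=\{L_1,L_2,\ldots\}$ (finite or infinite list, repetitions harmless after deleting duplicates) and build a witness assignment directly, then invoke \cref{thm:characterization} (iii)$\Rightarrow$(i). The idea is to use the list order: a witness for $L_n$ should rule out, among earlier languages $L_i$ ($i<n$), exactly those whose presence among the active family would make the intersection with $L_n$ finite. The abstract promises singleton witnesses, but for the corollary any finite witnesses suffice, so I will first aim for the simplest finite construction and note the singleton refinement only if it comes cheaply.

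Concretely, for each $n$ let $B_n=\{i<n: |L_1\cap\cdots|\ \text{fails}\}$ be handled as follows: define $I_n=\{i<n: L_n\not\subseteq L_i\}$, and for each $i\in I_n$ pick a point $a_{n,i}\in L_n\setminus L_i$. Set $T(L_n)=\{a_{n,i}:i\in I_n\}$, a finite subset of $L_n$. Now fix a finite $S$ with $\activefamily(S)\neq\varnothing$ and let $n$ be the least index with $L_n$ active. For any other active $L_m$ we have $m>n$, and $n\notin I_m$ would mean $L_m\subseteq L_n$, while $n\in I_m$ would put $a_{m,n}\in T(L_m)\subseteq S\subseteq L_n$, contradicting $a_{m,n}\notin L_n$. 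Hence every active $L_m$ satisfies $L_m\subseteq L_n$, and also $S\subseteq L_m$. This alone does not give an infinite intersection, so the ordering must be reversed: I will instead let the \emph{largest}-priority bookkeeping go the other way, choosing witnesses so that the active family is forced to contain the minimal active language as a subset of all others, i.e.\ every active $L_m$ contains $L_n$.

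So the corrected plan: set $I_n=\{i<n: L_i\not\subseteq L_n\}$ is useless since points must lie in $L_n$; instead require that when $L_n$ and a later $L_m$ are both active, $L_n\subseteq L_m$. Put the witness on the \emph{earlier} language is impossible since it depends on all later ones. The workable device is the Kleinberg--Mullainathan style ``critical'' language: for finite $S$, the active family's least-index member $L_n$ and condition $S\subseteq L_m$; give $T(L_m)$ the points $a_{m,i}\in L_m\setminus L_i$ for $i<m$ with $L_m\not\subseteq L_i$. Then any active $L_m$ with $m>n$ satisfies $L_m\subseteq L_n$... and conversely any two active $L_m,L_{m'}$ are nested by the same argument, with the later one contained in the earlier. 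The intersection of the active family is then a decreasing chain indexed by indices, which can still be finite if infinitely many are active. This is the main obstacle: controlling infinitely many nested active languages. I would fix it by additionally putting into $T(L_m)$ a point $b_m\in L_m$ that is not among the first $m$ elements of some fixed enumeration in a way forcing $|S|\ge m$, e.g.\ include $E_m(L_m)$, the first $m$ points of $L_m$. Then an active $L_m$ needs $|S|\ge m$, so only finitely many languages are active, they form a finite nested chain, and the intersection equals the last (smallest) one, which is infinite. This gives \eqref{eq:witness}, and \cref{thm:characterization} finishes.
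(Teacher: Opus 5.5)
Your final construction is exactly the paper's proof. It uses points $a_{m,i}\in L_m\setminus L_i$ for each earlier $i$ with $L_m\not\subseteq L_i$, padded by $E_m(L_m)$ so that only indices $m\le|S|$ can be active, and then the active language of largest index lies inside every other active language and equals the active core. The talk of reversing the ordering in your third paragraph is an unnecessary detour: the nesting argument of your second paragraph, applied to every pair of active languages, plus the padding already suffices.
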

\begin{proof}
List the distinct languages as $L_1,L_2,\ldots$, using a finite list if
needed; the empty family is vacuous. For each $i$ and each $j<i$ with
$L_i\not\subseteq L_j$, choose a point $w_{ij}\in L_i\setminus L_j$.
Include all these points in $T(L_i)$, and enlarge this finite set within
$L_i$ until $|T(L_i)|\ge i$.

Fix a finite sample $S$ with a nonempty active family. Every active index
satisfies $i\le|T(L_i)|\le|S|$, so there is a largest active index $m$.
If $j<m$ is active and $L_m\not\subseteq L_j$, then
$w_{mj}\in T(L_m)\subseteq S\subseteq L_j$, contradicting its choice.
Thus $L_m\subseteq L_j$ for every active $j$. Since $L_m$ itself is
active, the common intersection is exactly $L_m$, which is infinite.
Apply \cref{thm:characterization}.
\end{proof}

The padding makes the active list finite; the disagreement points make
its last member a common infinite core. No uniform upper bound on witness
size is required here. In \cref{thm:singleton} we will strengthen this conclusion by showing
that every countable family admits distinct singleton witnesses.

For a single target, the empty witness suffices. More generally, if a
family has an infinite common subset, setting $T(L)=\varnothing$ for
every target already satisfies \eqref{eq:witness}.

\subsection{Increasing covers with eventually unbounded closure}
\label{sec:euc-recovery}

For a language family $\cK$ and a finite set $S$ contained in at least one
member of $\cK$, define its positive closure by
\begin{equation}\label{eq:positive-closure}
  \closure{\cK}{S}
  =\bigcap\{K\in\cK:S\subseteq K\}.
\end{equation}
The family inside the intersection is the \emph{version space}: the
targets consistent with $S$. If $F\subseteq S$ and this family is nonempty, then
$\closure{\cK}{F}\subseteq\closure{\cK}{S}$.
We use the following finite-prefix formulation of the eventually unbounded
closure property of \citet[Definition~40]{raman2025generation}.

\begin{definition}[Eventually unbounded closure]\label{def:euc}
A family $\cK$ has \emph{eventually unbounded closure} \textup{(EUC)} if
for every $L\in\cK$ there is a finite $F\subseteq L$ such that
$\closure{\cK}{F}$ is infinite.
\end{definition}

This formulation is equivalent to requiring an infinite closure eventually
along every text of every member. One direction follows by taking a prefix
of any such text. Conversely, every text eventually contains a fixed finite
$F$, after which closure monotonicity preserves infinitude. This equivalence
uses only texts that enumerate a member of $\cK$.

For an EUC family alone, choosing $T(L)=F(L)$ from \cref{def:euc}
already works: at a nonempty active sample $S$, its full version-space
closure is infinite and is contained in the active intersection.
The next construction extends this observation to an increasing cover,
recovering \citet[Theorem~44]{raman2025generation}.

\begin{corollary}[Increasing EUC covers]\label{cor:euc}
Suppose
$\cH=\bigcup_{n\ge1}\cH_n$, where
$\cH_1\subseteq\cH_2\subseteq\cdots$ and every $\cH_n$ has EUC.
Then $\cH$ satisfies the finite-witness condition and is generatable in the limit.
\end{corollary}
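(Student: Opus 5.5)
The plan is to imitate the countable-family argument, using the index of the first cover level as the ordering. For each $L\in\cH$ let $n(L)=\min\{n:L\in\cH_n\}$. By \cref{def:euc} applied in $\cH_{n(L)}$, choose a finite $F(L)\subseteq L$ with $\closure{\cH_{n(L)}}{F(L)}$ infinite. Then enlarge this finite set inside $L$ (possible since $L$ is infinite) until $|T(L)|\ge n(L)$, and define
\[
 T(L)=F(L)\cup P(L),\qquad |T(L)|\ge n(L),
\]
where $P(L)\subseteq L$ is the padding. The padding plays the same role as in \cref{cor:countable}: at a finite sample $S$, every active $L$ satisfies $n(L)\le|T(L)|\le|S|$.

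Now fix a finite $S$ with $\activefamily(S)\ne\varnothing$, and let $N=\max\{n(L):L\in\activefamily(S)\}\le|S|$. This maximum exists because the levels are bounded by $|S|$. Pick an active $L^\ast$ with $n(L^\ast)=N$. Since the cover is increasing, every active $L$ lies in $\cH_{n(L)}\subseteq\cH_N$, and it contains $S$. So every active language belongs to the version space $\{K\in\cH_N:S\subseteq K\}$, and therefore
\[
 \bigcap_{L\in\activefamily(S)}L\;\supseteq\;\closure{\cH_N}{S}\;\supseteq\;\closure{\cH_N}{F(L^\ast)}.
\]
The last inclusion is closure monotonicity, which applies because $F(L^\ast)\subseteq T(L^\ast)\subseteq S$ and the version space of $S$ in $\cH_N$ is nonempty (it contains $L^\ast$).

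The step that needs care is the last term. Its infinitude was arranged in $\cH_{n(L^\ast)}=\cH_N$, which is exactly why the closure is taken at level $N$, the maximal active level. Taking it at a lower level fails: a larger family has a smaller closure, so infinitude could be lost. Choosing $L^\ast$ at the top active level makes the level of its EUC witness coincide with the level at which all active languages live. Hence the active intersection is infinite, \eqref{eq:witness} holds, and \cref{thm:characterization} gives generatability.
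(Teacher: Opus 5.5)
Your proof is correct and follows the paper's argument essentially step for step: you pad each witness to size at least $n(L)$, take the maximal active level $m$, chain $\closure{\cH_m}{F(L_*)}\subseteq\closure{\cH_m}{S}\subseteq\bigcap_{L\in\activefamily(S)}L$, and then apply \cref{thm:characterization}. The only cosmetic difference is that you fix $n(L)$ as the least level containing $L$, whereas the paper allows any level with $L\in\cH_{n(L)}$; either choice works.
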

\begin{proof}
For each $L\in\cH$, choose $n(L)$ with $L\in\cH_{n(L)}$.
By EUC, choose a finite $F(L)\subseteq L$ for which
$\closure{\cH_{n(L)}}{F(L)}$ is infinite.
Extend $F(L)$ to a finite $T(L)\subseteq L$ with
$|T(L)|\ge n(L)$.

Fix $S$ with a nonempty active family.
For each active $L$ we have $n(L)\le|T(L)|\le|S|$, so the active layer
indices have a maximum $m$. Choose an active $L_*$ with $n(L_*)=m$.
Because $F(L_*)\subseteq S\subseteq L_*$, the version space of $\cH_m$
at $S$ is nonempty, and
\[
  \closure{\cH_m}{F(L_*)}\subseteq\closure{\cH_m}{S}.
\]
The right-hand side is therefore infinite. Increasingness of the cover
puts every active language in $\cH_m$, whence
\[
  \closure{\cH_m}{S}\subseteq
  \bigcap_{L\in\activefamily(S)}L.
\]
The active intersection is infinite. Apply \cref{thm:characterization}.
\end{proof}

The two constructions use the same idea: the witness size bounds the
indices that can be active at a finite sample, and the largest active
index supplies an infinite common core. The EUC construction allows each
layer to contain uncountably many targets.

EUC itself is not necessary. The cofinite subsets of $\X$ form a countable,
generatable family, but their positive closure at every finite sample $S$
is exactly $S$: for each $x\notin S$, the target $\X\setminus\{x\}$
is consistent and omits $x$. The active-family criterion succeeds by
retaining the targets whose assigned witnesses have appeared, even when
the full version-space closure is finite. The increasing-cover condition
above is sufficient; we make no necessity claim for that condition.

\section{Positive separation and its width}
\label{sec:separation}

The preceding constructions establish the existence of compatible finite
witnesses. We now ask how small they can be. A reformulation in terms of
separating languages inside subfamilies with finite common cores makes
that question easier to study.
For nonempty $\mathcal F\subseteq\cH$, write
$\Core(\mathcal F)=\bigcap_{L\in\mathcal F}L$ and call $\mathcal F$
\emph{bad} if this intersection is finite.

\begin{definition}[Positive separation]\label{def:separation}
A positive assignment $P$, with $P(L)\subseteq L$ for every $L\in\cH$,
\emph{separates} $\cH$ if
\begin{equation}\label{eq:separation}
 \forall\text{ nonempty bad }\mathcal F\subseteq\cH,
 \quad \exists L,K\in\mathcal F:\ P(L)\not\subseteq K.
\end{equation}
The values of $P$ are allowed to be infinite unless stated otherwise.
\end{definition}

Thus a bad subfamily must contain a witness point, assigned to one of its
members, that another member omits. One assignment must meet all these
constraints simultaneously.

\begin{proposition}[Separation equivalence]\label{prop:separation}
A positive assignment $T$ with finite values satisfies \eqref{eq:witness} if and only if
it separates $\cH$. For any fixed positive assignment, it is equivalent in
\eqref{eq:separation} to test only countable bad subfamilies.
\end{proposition}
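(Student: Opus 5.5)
The plan is to prove the two directions of the equivalence separately and then handle the countable reduction; each part is a short argument about when a subfamily is contained in an active family.

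\emph{Separation implies the witness condition.} Suppose $T$ has finite values and separates $\cH$. Fix a finite $S$ with $\activefamily(S)\ne\varnothing$, and suppose toward a contradiction that $\mathcal F=\activefamily(S)$ is bad. Separation gives $L,K\in\mathcal F$ with $T(L)\not\subseteq K$. But $L$ active gives $T(L)\subseteq S$, and $K$ active gives $S\subseteq K$. Hence $T(L)\subseteq K$, which is a contradiction.

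\emph{The witness condition implies separation.} Assume \eqref{eq:witness} and let $\mathcal F$ be a nonempty bad subfamily. Suppose toward a contradiction that $T(L)\subseteq K$ for all $L,K\in\mathcal F$. Put $S=\Core(\mathcal F)$, which is finite because $\mathcal F$ is bad. Each $T(L)$ lies in every $K\in\mathcal F$, so $T(L)\subseteq S\subseteq L$. Thus every member of $\mathcal F$ is active at $S$, so $\mathcal F\subseteq\activefamily(S)$ and $\activefamily(S)$ is nonempty. Then
\[
\bigcap_{L\in\activefamily(S)}L\subseteq\Core(\mathcal F)=S,
\]
which is finite and contradicts \eqref{eq:witness}. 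This is the same-input trick used in the proof of \cref{thm:characterization}, with the finite core itself serving as the sample.

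\emph{Reduction to countable bad subfamilies.} Fix an arbitrary positive assignment $P$, whose values may be infinite, and suppose the separation condition holds for every countable bad subfamily. Let $\mathcal F$ be an arbitrary bad subfamily. The step needing care is extracting a countable bad subfamily $\mathcal F_0\subseteq\mathcal F$ that still carries a separating pair back to $\mathcal F$. Since $\X$ is countable and $\Core(\mathcal F)$ is finite, for each $x\in\X\setminus\Core(\mathcal F)$ choose $K_x\in\mathcal F$ with $x\notin K_x$. Add one arbitrary member $L_0$ of $\mathcal F$ so that the result is nonempty even if $\Core(\mathcal F)=\X$; this case cannot actually occur, because $\Core(\mathcal F)$ is finite while $\X$ is infinite. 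Set $\mathcal F_0=\{L_0\}\cup\{K_x: x\notin\Core(\mathcal F)\}$.

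Then $\mathcal F_0$ is countable, and $\Core(\mathcal F_0)=\Core(\mathcal F)$, which is finite. So $\mathcal F_0$ is bad, and the hypothesis yields $L,K\in\mathcal F_0\subseteq\mathcal F$ with $P(L)\not\subseteq K$. That pair witnesses separation for $\mathcal F$. The converse is trivial, since countable bad subfamilies are among all bad subfamilies.
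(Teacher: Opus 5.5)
Your proof is correct and follows the paper's argument essentially step for step. Both directions of the equivalence use the same observation, with the finite core $S=\Core(\mathcal F)$ serving as the sample, and the countable reduction uses the same subfamily $\{L_*\}\cup\{K_x : x\notin\Core(\mathcal F)\}$, whose core equals $\Core(\mathcal F)$.
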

\begin{proof}
If separation fails for a nonempty bad $\mathcal F$, then
$T(L)\subseteq K$ for every $L,K\in\mathcal F$. Consequently
$T(L)\subseteq\Core(\mathcal F)$ for every $L\in\mathcal F$.
At the finite sample $S=\Core(\mathcal F)$ all these languages are active.
Their full active intersection is contained in $S$, violating
\eqref{eq:witness}.

Conversely, a nonempty active family with finite intersection is itself a
bad subfamily. Every one of its witnesses lies in $S$, and every one of its
languages contains $S$. Hence it is not separated.

For the countable reduction, fix a nonempty $\mathcal F$ and one member
$L_*\in\mathcal F$. For each $x\notin\Core(\mathcal F)$ choose
$K_x\in\mathcal F$ omitting $x$. The at-most-countable family
$\{L_*\}\cup\{K_x:x\notin\Core(\mathcal F)\}$ has exactly the same
intersection. If the original family is unseparated by a fixed assignment,
so is this subfamily.
\end{proof}

Countable reduction does not exchange the quantifiers
$\exists T\,\forall\mathcal F$ and $\forall\mathcal F\,\exists T_{\mathcal F}$.
Assignments valid on separate countable subfamilies need not fit together
on the whole class. Appendix~\ref{app:quantifiers} gives a single
counterexample that also explains why finite active intersections cannot
replace the full intersection.

\begin{theorem}[Singleton witnesses for countable families]
\label{thm:singleton}
Every countable family of infinite languages admits a separating singleton
assignment $T(L)=\{p_L\}$ with all points $p_L$ distinct.
\end{theorem}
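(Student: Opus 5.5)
The plan is to build the points $p_L$ by a diagonal recursion along a list of the family, and then check the separation condition \eqref{eq:separation} directly, not through active families. Note that singleton witnesses do not bound active indices by $|S|$, so the padding argument of \cref{cor:countable} is unavailable. The distinctness of the points will play the role that padding played there.

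\textbf{Reduction to finite bad subfamilies.} Suppose the points $p_L$ are distinct and a nonempty bad $\mathcal F$ is not separated. Then $p_L\in K$ for all $L,K\in\mathcal F$, so every $p_L$ with $L\in\mathcal F$ lies in the finite set $\Core(\mathcal F)$. Distinctness then forces $|\mathcal F|\le|\Core(\mathcal F)|<\infty$. It therefore suffices to arrange that every \emph{finite} nonempty bad $\mathcal F$ contains some $L$ with $p_L\notin\Core(\mathcal F)$. Such an $L$ is automatically omitted by some $K\in\mathcal F$.

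\textbf{Recursive choice.} List the distinct languages as $L_1,L_2,\ldots$, with a finite list if needed; the empty family is vacuous. At stage $i$, consider the finitely many subfamilies $\mathcal F\subseteq\{L_1,\ldots,L_i\}$ that contain $L_i$ and are bad. Let $B_i$ be the union of their cores together with the previously chosen points $p_{L_1},\ldots,p_{L_{i-1}}$. This $B_i$ is a finite union of finite sets, hence finite. Since $L_i$ is infinite, we can choose $p_{L_i}\in L_i\setminus B_i$. The points obtained are distinct by construction.

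\textbf{Verification.} Let $\mathcal F$ be finite, nonempty and bad, and let $i$ be the largest index of a member of $\mathcal F$. Then $\mathcal F\subseteq\{L_1,\ldots,L_i\}$, it contains $L_i$, and it is bad. Hence $\Core(\mathcal F)\subseteq B_i$, so $p_{L_i}\notin\Core(\mathcal F)$. Thus some $K\in\mathcal F$ omits $p_{L_i}$, and $\mathcal F$ is separated. Combined with the reduction, every bad subfamily is separated, and by \cref{prop:separation} the finite-valued assignment $T(L)=\{p_L\}$ satisfies \eqref{eq:witness}.

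The main obstacle is the conceptual step of seeing that distinctness alone rules out infinite unseparated bad families; once that is observed, the remaining constraints at each stage are finitely many and each excludes only a finite set.
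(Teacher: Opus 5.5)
Your proof is correct and follows essentially the same route as the paper. It uses the same diagonal choice of $p_{L_i}$ avoiding finite cores and earlier points, the largest-index argument for finite bad subfamilies, and distinctness to rule out unseparated infinite ones. Restricting $B_i$ to cores of bad subfamilies containing $L_i$ is a harmless economy over the paper's union over all bad subfamilies of $\{L_1,\ldots,L_i\}$.
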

\begin{proof}
List the distinct targets as $L_1,L_2,\ldots$, using a finite list if needed.
At step $i$, let $B_i$ be the union of those common intersections of
nonempty subfamilies of $\{L_1,\ldots,L_i\}$ that are finite sets:
\[
 B_i=\bigcup_{\substack{\varnothing\ne I\subseteq\{1,\ldots,i\}\\
                         |\bigcap_{j\in I}L_j|<\infty}}
                         \ \bigcap_{j\in I}L_j.
\]
There are only finitely many index sets $I$, so $B_i$ is finite. Choose
\[
 p_i\in L_i\setminus\bigl(B_i\cup\{p_1,\ldots,p_{i-1}\}\bigr).
\]
If a bad subfamily is finite, let $i$ be its largest index. Its core is
contained in $B_i$ and therefore omits $p_i$; some member of the subfamily
omits this assigned point. If a bad subfamily is infinite, its infinitely
many distinct assigned points cannot all lie in its finite core. It too
is separated. The empty class is vacuous.
\end{proof}

This sharpens the construction in \cref{cor:countable} from arbitrary
finite witnesses to singletons. It concerns the size of a certificate,
not the time needed to observe that certificate: a text may postpone its
single designated point arbitrarily long.

To measure witness size, we use the ordered scale
\[
 0<1<2<\cdots<\omega<\omega+1.
\]
Here $\omega$ is the first infinite ordinal, the order type of
$0<1<2<\cdots$; its successor $\omega+1$ appends a greatest element after
that sequence. Both have countably infinite underlying sets. The distinction
concerns their order types. In particular, $\sup_{n\in\N_0}n=\omega$.

\begin{samepage}
\begin{definition}[Positive separation width]\label{def:width}
The \emph{positive separation width} $\spwidth(\cH)$ is defined by three cases:
\begin{enumerate}[label=\textup{(\roman*)}]
 \item If a separating assignment has all its witness sizes bounded by
       one finite integer, the width is the least such bound.
 \item If a pointwise finite separating assignment exists, but no
       separating assignment has a uniform finite bound, the width is $\omega$.
 \item If no pointwise finite separating assignment exists, the width is
       $\omega+1$.
\end{enumerate}
\end{definition}
\end{samepage}

Thus width $\omega$ allows each target a finite witness while requiring
unbounded sizes across targets. Width $\omega+1$ means that every separating
assignment has at least one infinite value. The empty class has width zero.

\begin{corollary}[Generation threshold]\label{thm:width}
\label{cor:width-characterization}
For every $\cH\subseteq\infinite{\X}$,
\begin{equation}\label{eq:width-characterization}
 \cH\text{ is generatable in the limit}
 \quad\Longleftrightarrow\quad \spwidth(\cH)\le\omega.
\end{equation}
\end{corollary}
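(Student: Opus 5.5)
The plan is to unwind \cref{def:width} and chain together \cref{thm:characterization} and \cref{prop:separation}; no new construction is needed. The key observation is that the ordered scale $0<1<2<\cdots<\omega<\omega+1$ makes $\spwidth(\cH)\le\omega$ equivalent to $\spwidth(\cH)\ne\omega+1$. By the three cases of \cref{def:width}, this in turn is equivalent to the existence of a pointwise finite separating assignment. Indeed, cases (i) and (ii) both presuppose such an assignment, since a uniformly bounded assignment is in particular pointwise finite. Case (iii) is exactly its absence. The three cases are exhaustive and mutually exclusive, so $\spwidth$ is well defined. Case (i) uses a least element of a nonempty subset of $\N_0$.

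Next, I would invoke \cref{prop:separation}. A positive assignment $T$ with finite values separates $\cH$ in the sense of \eqref{eq:separation} if and only if it satisfies \eqref{eq:witness}. Such a $T$ is precisely a witness assignment in the sense of \cref{def:witness}, because $T(L)\subseteq L$ is part of both definitions. Hence a pointwise finite separating assignment exists if and only if $\cH$ admits a witness assignment satisfying \eqref{eq:witness}. That is item (iii) of \cref{thm:characterization}, which is equivalent to item (i), generatability in the limit. Chaining these equivalences gives \eqref{eq:width-characterization}.

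The only step that needs care is the order-theoretic one. I must check that ``$\le\omega$'' picks out exactly cases (i) and (ii), including the finite values and the degenerate empty class of width $0$, for which both sides hold trivially. I must also check that the phrase ``pointwise finite'' in \cref{def:width} matches ``finite values'' in \cref{prop:separation}. I expect no real obstacle: all the mathematical content sits in \cref{thm:characterization}, in particular in the normalization of \cref{thm:normalization}.
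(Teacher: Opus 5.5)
Your proposal is correct and follows the paper's own proof. The paper likewise observes that, by \cref{def:width}, $\spwidth(\cH)\le\omega$ means exactly that a pointwise finite separating assignment exists, and then chains \cref{prop:separation} with \cref{thm:characterization}; your additional checks on the ordered scale and on matching ``pointwise finite'' with ``finite values'' spell out bookkeeping that the paper leaves implicit.
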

\begin{proof}
By definition, width at most $\omega$ means that a pointwise finite separating
assignment exists. Apply \cref{prop:separation,thm:characterization}.
\end{proof}

At width $\omega$, each target's finite witness still appears by some finite
time on every text for that target. Unbounded sizes across targets are
therefore compatible with convergence on each text.

\begin{proposition}[Basic properties]\label{prop:width-basic}
The width is invariant under bijective relabeling of $\X$ and monotone
under passing to subfamilies. It is zero exactly when the full common
intersection is infinite, with $\Core(\varnothing)=\X$.
\end{proposition}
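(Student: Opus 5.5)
The plan is to verify the three claims separately, each by transporting or restricting separating assignments and then comparing against the three cases of \cref{def:width}.

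\textbf{Relabeling invariance.} Let $\pi:\X\to\X$ be a bijection and write $\pi\cH=\{\pi[L]:L\in\cH\}$. Given an assignment $P$ for $\cH$, define $P'(\pi[L])=\pi[P(L)]$. This is well defined because $\pi$ acts injectively on subsets. It is positive, and it preserves witness sizes exactly. Since $\pi$ commutes with intersections and preserves cardinality, a subfamily $\mathcal F\subseteq\cH$ is bad exactly when $\pi\mathcal F$ is bad. Likewise, $P(L)\not\subseteq K$ holds exactly when $\pi[P(L)]\not\subseteq\pi[K]$. So $P$ separates $\cH$ if and only if $P'$ separates $\pi\cH$. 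Applying the same argument to $\pi^{-1}$ gives the converse. Hence the two families have the same sets of achievable witness-size profiles: uniformly bounded by $n$, pointwise finite, or arbitrary. Each case of \cref{def:width} therefore returns the same value for both.

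\textbf{Monotonicity.} Let $\cH'\subseteq\cH$ and let $P$ separate $\cH$. I will check that the restriction $P|_{\cH'}$ separates $\cH'$. Every nonempty bad subfamily of $\cH'$ is also a nonempty bad subfamily of $\cH$, so it already contains the required $L,K$. Restriction keeps uniform bounds and pointwise finiteness. Hence the least bound for $\cH'$ is at most that for $\cH$. If $\cH$ has width $\omega$, then $\cH'$ has a pointwise finite separating assignment, so its width is at most $\omega$. If $\cH$ has width $\omega+1$, the inequality is trivial. The one point to be careful about is that $\omega$ is reached only through case (ii), so I will phrase the comparison case by case rather than as a minimum over a set of ordinals.

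\textbf{Width zero.} Width zero means that the empty assignment $P\equiv\varnothing$ separates $\cH$, since size-$0$ witnesses are exactly empty sets. With empty witnesses the inclusion $P(L)\not\subseteq K$ never holds, so separation holds exactly when $\cH$ has no nonempty bad subfamily. If $\Core(\cH)$ is infinite, every nonempty subfamily has core containing $\Core(\cH)$, so none is bad. Conversely, if $\cH\neq\varnothing$ has finite core, then $\cH$ itself is a nonempty bad subfamily. For $\cH=\varnothing$, both sides hold under the convention $\Core(\varnothing)=\X$: there are no nonempty subfamilies at all, and $\X$ is infinite.

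This consistency check for the empty family is the only delicate point. Otherwise the argument is routine bookkeeping and should not present a real obstacle.
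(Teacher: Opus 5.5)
Your proof is correct and follows the same route as the paper's terse argument, with the bookkeeping written out: restriction preserves separation and witness-size bounds, a bijection preserves inclusions, intersections and cardinalities, and width zero amounts to the empty assignment separating, i.e.\ to the absence of any nonempty bad subfamily, including the empty-family case via $\Core(\varnothing)=\X$. Your transport argument also works verbatim for a bijection from $\X$ onto another countably infinite set, which is how the relabeling is later used to complete the hierarchy theorem.
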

\begin{proof}
Restriction preserves separation and witness-size bounds; bijections
preserve inclusions and cardinalities. Width zero means the empty assignment
separates. This holds exactly when there is no nonempty bad subfamily,
equivalently when the full common intersection is infinite.
\end{proof}

\begin{remark}[Equivalent optimization formula]\label{rem:width-cost}
Define the witness cost by $\wcost(W)=|W|$ for finite $W$, and
$\wcost(W)=\omega+1$ for infinite $W$.
Then
\begin{equation}\label{eq:width}
 \spwidth(\cH)=\min_{P\text{ separates }\cH}\ \sup_{L\in\cH}\wcost(P(L)),
\end{equation}
with empty supremum zero. A feasible assignment always exists: $P(L)=L$
separates every bad subfamily. Each assignment has cost in
$\N_0\cup\{\omega,\omega+1\}$, so the minimum is attained, and its three
possible regimes are exactly those in \cref{def:width}.
The charge $\omega+1$ is a cost convention, not the cardinality of an
infinite witness. Charging $\omega$ would merge the cost of an infinite
witness with the supremum of unbounded finite witness sizes.
\end{remark}

\begin{samepage}
\section{The complete separation-width hierarchy}
\label{sec:hierarchy}

The characterization guarantees finite witnesses for individual targets.
Their required sizes have a richer structure: every finite level occurs,
and some generatable classes require unbounded finite witnesses.

\begin{theorem}[Separation-width hierarchy]\label{thm:hierarchy}
\label{cor:full-range}
On every countably infinite universe $\X$:
\begin{enumerate}[label=\textup{(\roman*)}]
 \item Every countable language family has width at most one.
 \item For every $d\in\N_0\cup\{\omega\}$, some generatable family has
       width exactly $d$.
 \item The family $\infinite{\X}$ of all infinite subsets has width $\omega+1$.
\end{enumerate}
Thus every value of the width occurs.
\end{theorem}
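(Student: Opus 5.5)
Part (i) is immediate from \cref{thm:singleton}. For a countable family it gives a separating assignment $T(L)=\{p_L\}$. Every value has cost one, so \eqref{eq:width} gives width at most one. Part (iii) I would obtain from \cref{cor:width-characterization} by showing directly that $\infinite{\X}$ is not generatable in the limit. Given any total $G$, I build a text adversarially. Keep a finite set $Y$ of past outputs. At each step, append a fresh point $z_n\notin Y\cup\{z_1,\ldots,z_{n-1}\}$, which exists because $Y$ is finite. Then record $y_n=G(z_1,\ldots,z_n)$ in $Y$. The content $L=\{z_n\}$ is infinite, and $(z_n)$ is a text for it. Each output $y_n$ is never later added, so no $y_n$ lies in $L$. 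Hence $G$ errs at every time, and width $\omega+1$ follows from \cref{cor:width-characterization}.

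For (ii), the case $d=0$ is a single language, by \cref{prop:width-basic}. For $d=\omega$ I take the two-core family. Fix disjoint infinite blocks $A,B$ with infinite complement, and let $\cH$ consist of all infinite $L$ with $A\subseteq L$ or $B\subseteq L$. For the upper bound, choose for each $L$ a finite witness whose size encodes an index from a countable increasing cover. For instance, let $T(L)$ contain the first $n$ points of $A$ or of $B$, with $n$ tied to how early $L$ meets the other block. Then check that the active family at $S$ always shares all of $A$ or all of $B$. If this direct assignment is awkward, \cref{cor:euc} applies instead: layer $\cH$ by the least $n$ with $E_n(\X)\cap B\subseteq L$ or similar, so that each layer has EUC. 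Either way, \cref{prop:separation} turns the assignment into a finite separating one.

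For the lower bound I use a diagonal capture argument. Suppose $T$ separates with $|T(L)|\le n$ for all $L$. Build $n+2$ targets $L_0,L_1,\ldots$, alternating between $A$-containing and $B$-containing sets. Each $L_{i+1}$ contains $\bigcup_{j\le i}T(L_j)$ together with one core block, but omits a fresh point of the other block. The witnesses of the later targets are then captured by a pigeonhole over the at most $n$ points. The resulting subfamily is bad, with core contained in the finite union of witnesses, yet unseparated. This contradicts the bound.

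For finite $d\ge1$ I would use a family with $d$ disjoint infinite blocks $C_1,\ldots,C_d$ and a small base. Its targets are the sets containing all but one block together with finitely many prescribed points, arranged so that a separating witness must record one point per block. The upper bound is an explicit assignment of $d$ points. The lower bound again uses the capture lemma, forcing any assignment of size $<d$ to leave an unseparated bad subfamily, and then an incidence count over blocks.

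The main obstacle is the lower bounds, both the exact finite $d$ and the unboundedness at $\omega$. Witnesses may depend on the infinite tails of targets, so the adversary must choose targets whose witnesses are captured inside a prescribed finite core. Since $T$ is applied only after each target is fixed, I would first try an iterative construction: fix a target, read off its witness, and enlarge the intended core by that witness. The next target is then chosen to contain the current core and to omit a new point.
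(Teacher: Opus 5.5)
Parts (i) and (iii), the case $d=0$, and the upper bound for the two-core family are sound and essentially the paper's. Two small repairs are needed:
\begin{enumerate}
 \item In (iii), an output equal to an earlier $z_i$ does lie in $L$. It is not fresh, though, so $G$ still errs at every step.
 \item In the $\omega$ upper bound, the index must be that of the first point of the opposite block that $L$ \emph{omits}, with $\varnothing$ for targets containing both blocks. Indexing by the first point $L$ meets makes $A\cup\{b_0\}$ and $B\cup\{a_0\}$ simultaneously active at $\{a_0,b_0\}$.
\end{enumerate}

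\textbf{The lower bound for $\omega$.} This is the genuine gap. Your iteration secures only one direction of containment: it places earlier witnesses inside later targets. But $T(L_{i+1})$ is read only after $L_{i+1}$ is fixed, and it may contain exactly the point an earlier $L_j$ was built to omit. No pigeonhole over the at most $n$ witness points is shown to exclude this. Moreover, your phrase ``core contained in the finite union of witnesses'' suggests targets of the form block plus finitely many points. If so, the construction never leaves the countable subfamily of finite-tail targets, and by part (i) that subfamily has a separating singleton assignment. The constraints you use are then satisfiable with $n=1$, so no contradiction can come from them. Any valid lower bound must use witnesses of targets with infinite tails.

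That is what the paper's argument does. It first fixes countably many partners $K_{j,n}=K_j(E_n)$ with growing finite $A_*$-tails. It then uses \cref{lem:capture} to choose an \emph{infinite} $D\subseteq B_*$ containing infinitely many of the bounded sets $U_n$ of their $B_*$-witness points. Since $T(L_i(D))\cap A_*$ is finite, it lies in $E_n$ for a large captured $n$, and containment then holds in both directions. This compatible pair shares $D$, so it is not bad. The contradiction comes instead from the full active family at $S=T(L_i(D))\cup T(K_{j,n})$. Validity makes its intersection infinite, which makes the active left core $C_{i,S}\subseteq D$ infinite. The core-avoidance clause of the lemma forbids exactly that.

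\textbf{The finite levels.} Your plan has a second gap here. The $d$-block family is never pinned down, and its natural readings have the wrong width. With prescribed finite extras it is countable, so its width is at most one. With a free trace on the omitted block, the case $d=2$ is essentially the two-core class, of width $\omega$.

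The missing device is finite anchors. In $\cH^{(k)}$, each of the $k$ left types omits one point of a finite $B_0$, and each of the $k$ right types omits one point of $A_0$; the tails are arbitrary.
\begin{enumerate}
 \item One anchor separates each of the $k^2$ type pairs, which gives the cyclic upper bound $\lceil k/2\rceil$.
 \item The capture argument above forces every pair to be paid for by an anchor incidence. At most $2kq$ incidences are available, so $q\ge\lceil k/2\rceil$.
\end{enumerate}
Taking $k=2d$ yields width $d$. Because $\cH^{(k)}\subseteq\cH_{\mathrm{two}}$ for every $k$, monotonicity then gives the $\omega$ lower bound with no separate argument.
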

\end{samepage}

The finite levels come from an incidence-counting construction. A class
containing these examples for every finite size then realizes $\omega$.
The only auxiliary selection fact needed for the lower bounds is the
following diagonal lemma.

\subsection{A bounded-capture lemma}

The lower bound must handle assignments that depend arbitrarily on the
entire target language. The following lemma supplies that step.

\begin{lemma}[Bounded capture with core avoidance]\label{lem:capture}
Let $(U_n)_{n\ge1}$ be finite subsets of a set $Y$ with
$\sup_n|U_n|<\infty$, and let $\mathcal C$ be an at-most-countable family
of infinite subsets of $Y$. There is $D\subseteq Y$ such that
$U_n\subseteq D$ for infinitely many $n$, but $C\not\subseteq D$ for
every $C\in\mathcal C$.
\end{lemma}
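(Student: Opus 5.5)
The plan is to first regularize the sequence $(U_n)$ by an infinite $\Delta$-system (sunflower) argument, and then build $D$ diagonally. The diagonal step reserves one point of each $C\in\mathcal C$ outside $D$ while adding infinitely many of the $U_n$ to $D$. The bounded size $\sup_n|U_n|<\infty$ is used only in the first step. It guarantees that, after passing to a subsequence, every point threatened by a new petal lies in at most one petal, so each reserved point forbids only finitely many indices.

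\emph{Step 1: infinite $\Delta$-system.} I will show there are an infinite index set $N\subseteq\N$ and a finite kernel $R$ such that $U_n=R\cup V_n$ for $n\in N$, with the petals $V_n$ pairwise disjoint and disjoint from $R$. The proof is by induction on $b=\sup_n|U_n|$, always working with an infinite index set.
\begin{itemize}
 \item If some point $y$ lies in $U_n$ for infinitely many $n$, restrict to those $n$. Apply the induction hypothesis to $U_n\setminus\{y\}$, whose sizes are at most $b-1$, and put $y$ into the kernel.
 \item Otherwise every point lies in only finitely many $U_n$. Then choose indices greedily: each new index must avoid the finitely many indices whose sets meet the union of the sets already chosen. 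This gives pairwise disjoint $U_n$ with kernel $\varnothing$.
\end{itemize}
The base case $b=0$ is trivial, since every $U_n$ is empty. Repeated sets cause no trouble: they fall under the first case, with empty petals in the end. I expect this step to be the main obstacle. It is routine, but it is the only place where boundedness is essential; for unbounded sizes the lemma fails.

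\emph{Step 2: diagonal construction.} Enumerate $\mathcal C$ as $C_1,C_2,\ldots$; if $\mathcal C$ is finite, repeat a member, and if it is empty, take $D=Y$. Let $N=\{m_1<m_2<\cdots\}$. At stage $k$, choose a point
\[
 c_k\in C_k\setminus\bigl(R\cup V_{n_1}\cup\cdots\cup V_{n_{k-1}}\bigr).
\]
This is possible because $C_k$ is infinite and the removed set is finite. Then choose $n_k\in N$ with $n_k>n_{k-1}$ and $V_{n_k}\cap\{c_1,\ldots,c_k\}=\varnothing$. Since the petals are pairwise disjoint, each $c_i$ lies in at most one petal, so all but at most $k$ indices of $N$ qualify.

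\emph{Step 3: verification.} Put $D=R\cup\bigcup_k V_{n_k}$. Then $U_{n_k}\subseteq D$ for the infinitely many indices $n_k$. For each $j$, the point $c_j$ lies in $C_j$ but not in $D$, for three reasons:
\begin{itemize}
 \item $c_j\notin R$, by the choice of $c_j$;
 \item $c_j\notin V_{n_i}$ for $i<j$, again by the choice of $c_j$;
 \item $c_j\notin V_{n_i}$ for $i\ge j$, by the choice of $n_i$.
\end{itemize}
Hence $C_j\not\subseteq D$ for every $C_j\in\mathcal C$, which proves the lemma.
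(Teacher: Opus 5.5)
Your proof is correct, but it takes a different route from the paper's.

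\textbf{Your route.} You spend the boundedness hypothesis once, up front, on the infinite $\Delta$-system lemma, proved by induction on the size bound. After that the diagonalization is nearly free. Each reserved point $c_i$ lies in at most one petal, so stage $k$ excludes at most $k$ indices of the fixed set $N$, and the index pool never has to shrink.

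\textbf{The paper's route.} The paper never regularizes the sequence; it uses boundedness afresh at every stage. It takes $d+1$ points of $C_j\setminus\bigcup_{i<j}U_{n_i}$ and observes that every $U_n$ omits at least one of them. By the infinite pigeonhole principle, a single point $x_j$ is omitted by infinitely many $U_n$ with $n$ in the current pool $I_{j-1}$. Passing to $I_j=\{n\in I_{j-1}:x_j\notin U_n\}$ keeps $x_j$ out of every later selected set, and the choice of $x_j$ keeps it out of the earlier ones.

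\textbf{What each buys.} The paper's argument is a single self-contained pass with no induction on $d$. Its nested-pool invariant is what the Lean \texttt{DiagonalState} records; the formalization appendix mentions that an older capture proof went by induction on the size bound. Your version isolates the combinatorics in a standard, reusable lemma and makes the diagonal step trivially checkable, at the cost of proving that lemma first.

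Both constructions give $D$ as the union of the selected sets, since your $R\cup\bigcup_k V_{n_k}$ equals $\bigcup_k U_{n_k}$. Your remark that boundedness is essential is also right: $U_n=\{1,\ldots,n\}$ with $\mathcal C=\{\N\}$ is a counterexample without it.
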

\begin{proof}
Choose $d$ with $|U_n|\le d$. If $\mathcal C$ is empty, take
$D=\bigcup_n U_n$. Otherwise list its members as $C_1,C_2,\ldots$,
repeating a finite list if necessary.

Maintain an infinite set $I_{j-1}$ of available indices, starting with
$I_0=\N$. At step $j$, choose $d+1$ points of
$C_j\setminus\bigcup_{i<j}U_{n_i}$; this is possible because the removed
union is finite. Each $U_n$ contains at most $d$ of these points.
By the infinite pigeonhole principle, some chosen point $x_j$ is omitted
by infinitely many $U_n$ with $n\in I_{j-1}$. Let
\[
 I_j=\{n\in I_{j-1}:x_j\notin U_n\},
\]
and choose $n_j\in I_j$ larger than every previously selected index.

Set $D=\bigcup_jU_{n_j}$. The indices $n_j$ are distinct, so $D$ captures
infinitely many terms. For each $j$, the choice of $x_j$ excludes it from
every earlier selected set; the nested pools exclude it from the current
and every later selected set. Hence $x_j\in C_j\setminus D$ for every $j$.
\end{proof}

\subsection{Every finite level}

Fix $k\ge1$. Choose four pairwise disjoint sets: anchors
$A_0=\{a_0,\ldots,a_{k-1}\}$, $B_0=\{b_0,\ldots,b_{k-1}\}$
and countably infinite tails $A_*,B_*$. Put
$A=A_0\cup A_*$, $B=B_0\cup B_*$ and $\X=A\sqcup B$.
Define the anchored class
\begin{align}
 L_i(D)&=A\cup(B_0\setminus\{b_i\})\cup D,
       & i<k,\quad D\subseteq B_*,\label{eq:anchor-left}\\*
 K_j(E)&=B\cup(A_0\setminus\{a_j\})\cup E,
       & j<k,\quad E\subseteq A_*,\label{eq:anchor-right}\\*
 \cH^{(k)}&=\{L_i(D):i<k,D\subseteq B_*\}
        \cup\{K_j(E):j<k,E\subseteq A_*\}.\notag
\end{align}
Each side has an infinite common block. Opposite-side types omit different
specified anchors, while their tails vary without restriction.

For $k=3$, there are nine cross-side type pairs. Each can be separated by
putting $a_j$ in the left witness or $b_i$ in the right witness.
Figure~\ref{fig:finite-hierarchy} gives an assignment using two anchors per
left witness and one per right witness. Each vertex represents a whole
type, with every tail choice retained.

\begin{figure}[htbp]
\centering
\begingroup
\definecolor{LeftBlue}{HTML}{0072B2}
\definecolor{RightRed}{HTML}{D55E00}
\begin{tikzpicture}[x=1cm,y=1cm,
  vertex/.style={draw,rounded corners=2pt,minimum width=10mm,minimum height=7mm,font=\small},
  leftedge/.style={-{Stealth[length=2mm]},LeftBlue,line width=.85pt},
  rightedge/.style={-{Stealth[length=2mm]},RightRed,dashed,line width=.95pt}]
\node[font=\small\bfseries] at (2.0,3.65) {Nine type constraints};
\node[font=\small\bfseries] at (9.15,3.65) {The anchor paying for $(i,j)$};

\foreach \i/\y in {0/2.5,1/1.25,2/0} {
  \node[vertex,draw=LeftBlue,fill=LeftBlue!5] (L\i) at (0,\y) {$\mathcal L_{\i}$};
  \node[vertex,draw=RightRed,fill=RightRed!5] (K\i) at (4,\y) {$\mathcal K_{\i}$};
}
\draw[leftedge] (K0.west)--(L0.east);
\draw[leftedge] (K1.west)--(L0.east);
\draw[leftedge] (K1.west)--(L1.east);
\draw[leftedge] (K2.west)--(L1.east);
\draw[leftedge] (K2.west)--(L2.east);
\draw[leftedge] (K0.west)--(L2.east);
\draw[rightedge] (L0.east)--(K2.west);
\draw[rightedge] (L1.east)--(K0.west);
\draw[rightedge] (L2.east)--(K1.west);

\foreach \j/\x in {0/7.6,1/8.85,2/10.1} {
 \node[font=\small] at (\x,3.1) {$\mathcal K_{\j}$};
}
\foreach \i/\y in {0/2.5,1/1.25,2/0} {
 \node[font=\small,anchor=east] at (6.75,\y) {$\mathcal L_{\i}$};
 \node[font=\footnotesize,LeftBlue,anchor=west] at (10.85,\y) {2 left points};
}
\foreach \x/\y/\lab in {7.6/2.5/a_0,8.85/2.5/a_1,8.85/1.25/a_1,10.1/1.25/a_2,7.6/0/a_0,10.1/0/a_2} {
 \node[draw=LeftBlue!55,fill=LeftBlue!6,minimum width=10mm,minimum height=8mm,
 text=LeftBlue] at (\x,\y) {$\lab$};
}
\foreach \x/\y/\lab in {10.1/2.5/b_0,7.6/1.25/b_1,8.85/0/b_2} {
 \node[draw=RightRed!70,dashed,fill=RightRed!6,minimum width=10mm,minimum height=8mm,
 text=RightRed] at (\x,\y) {$\lab$};
}
\foreach \x in {7.6,8.85,10.1} {
 \node[font=\footnotesize,RightRed,align=center] at (\x,-.85) {1 right\\point};
}
\draw[leftedge] (.8,-.9)--(-.1,-.9);
\node[font=\footnotesize,anchor=west] at (.95,-.9) {left witness pays};
\draw[rightedge] (-.1,-1.45)--(.8,-1.45);
\node[font=\footnotesize,anchor=west] at (.95,-1.45) {right witness pays};
\end{tikzpicture}
\endgroup
\caption{\textbf{The width-two construction at $k=3$.}
Write $\mathcal L_i=\{L_i(D):D\subseteq B_*\}$ and
$\mathcal K_j=\{K_j(E):E\subseteq A_*\}$.
An arrow points toward the endpoint supplying the separating anchor:
solid blue edges use $a_j$ in the left witness, and dashed orange edges
use $b_i$ in the right witness. The matrix lists the selected anchor for
each pair. Every left witness uses two points; every right witness uses
one. This gives an upper bound of two. The lower bound must first show
that arbitrary tail-dependent witnesses cannot evade the anchor constraints.}
\label{fig:finite-hierarchy}
\end{figure}

\begin{proposition}[Exact finite levels]\label{thm:finite-levels}
For every $k\ge1$,
\[
 \spwidth(\cH^{(k)})=\left\lceil\frac{k}{2}\right\rceil.
\]
Consequently every positive integer is attained by a class generatable in the limit.
\end{proposition}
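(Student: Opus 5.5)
The plan is to prove the upper bound with an explicit cyclic assignment and the lower bound by a counting argument over anchors. First I classify the bad subfamilies. Every $L_i(D)$ contains $A$ and every $K_j(E)$ contains $B$, so a bad subfamily $\mathcal F$ must contain members of both sides. For $L=L_i(D)$ and $K=K_j(E)$ a direct computation gives
\[
 L\cap K=(A_0\setminus\{a_j\})\cup E\cup(B_0\setminus\{b_i\})\cup D.
\]
Moreover $a_j\notin K_j(E)$ for every $E$, and $b_i\notin L_i(D)$ for every $D$. Hence it suffices to cover every type pair $(i,j)$, in the sense that $a_j\in P(L_i(D))$ for all $D$ or $b_i\in P(K_j(E))$ for all $E$. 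Then any $\mathcal F$ meeting both sides contains a pair $L,K$ with $P(L)\not\subseteq K$ or $P(K)\not\subseteq L$.

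\textbf{Upper bound.} Put $r=\lceil k/2\rceil$ and take indices mod $k$. I set
\[
 P(L_i(D))=\{a_{i},a_{i+1},\dots,a_{i+r-1}\},\qquad
 P(K_j(E))=\{b_{j+1},\dots,b_{j+\lfloor k/2\rfloor}\},
\]
independently of the tails. If $j-i\bmod k\in\{0,\dots,r-1\}$, the left witness covers $(i,j)$. Otherwise $i-j\bmod k\in\{1,\dots,\lfloor k/2\rfloor\}$, and the right witness covers it. This generalizes Figure~\ref{fig:finite-hierarchy}, gives width $\le r$, and also yields generatability by \cref{cor:width-characterization}.

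\textbf{Lower bound.} Suppose $P$ separates with all $|P(\cdot)|\le d$, where $d\le r-1$, so that $2d<k$. If the witnesses ignored the tails, counting would finish the proof. The left witnesses cover at most $kd$ pairs via anchors $a_j$, the right witnesses at most $kd$ pairs via anchors $b_i$, and $2kd<k^2$. So some pair $(i,j)$ would be uncovered, and the two-member family $\{L_i(D),K_j(E)\}$ with finite tails chosen to absorb the tail parts of the witnesses would be bad and unseparated.

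The real work is handling tail dependence. I would proceed in three steps.
\begin{enumerate}
 \item For each type fix a sequence of tails $D_n\subseteq B_*$ and $E_n\subseteq A_*$. By the finite pigeonhole principle (there are finitely many possible anchor traces $P(\cdot)\cap A_0$ or $P(\cdot)\cap B_0$), pass to infinite subsequences on which every type has a constant anchor trace.
 \item Apply the counting above to these constant traces to obtain an uncovered pair $(i,j)$.
 \item Build the bad family $\mathcal F=\{L_i(D'_n)\}_n\cup\{K_j(E'_m)\}_m$. Its core on the $B$-side is $(B_0\setminus\{b_i\})\cup\bigcap_n D'_n$, and on the $A$-side it is $(A_0\setminus\{a_j\})\cup\bigcap_m E'_m$. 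Two requirements must hold together. The core must be finite, so the tails on each side must have finite common intersection. Each right witness must satisfy $P(K_j(E'_m))\cap B_*\subseteq D'_n$ for all $n$, and symmetrically on the other side, so that no member's witness escapes another member.
\end{enumerate}

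This mutual dependence is the main obstacle. I would resolve it with \cref{lem:capture}. Take $Y=B_*$ and $U_m=P(K_j(E_m))\cap B_*$, which are uniformly bounded by $d$. Let $\mathcal C$ be a countable family of infinite sets whose avoidance forces the captured union to omit infinitely many points. The lemma then supplies a set $D$ absorbing infinitely many right-side witnesses while keeping the core small. I would do the same symmetrically on the $A$-side, interleaving the two constructions so that each side's tails are defined from already selected witnesses of the other side.

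I expect the delicate point to be arranging that $\bigcap_n D'_n$ stays finite while each $D'_n$ still contains all captured right-witness tails. The intended fix is to let $D'_n$ be the lemma's $D$ with the $n$-th avoidance point removed, in a diagonal enumeration, and then re-verify that the anchor traces remain constant along the chosen subsequences.
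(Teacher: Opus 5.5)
\textbf{Upper bound.} Your upper bound is correct and coincides with the paper's cyclic allocation. Your right witness $\{b_{j+1},\dots,b_{j+\lfloor k/2\rfloor}\}$ is exactly $\{b_i: j\notin J_i\}$ for $J_i=\{i,\dots,i+r-1\}$.

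\textbf{Lower bound: the gap.} The lower bound has a genuine gap at step~3, and the proposed fix does not close it. You locate the uncovered pair $(i,j)$ from the anchor traces of the targets fixed in step~1. The bad family is then built from \emph{new} targets $L_i(D'_n)$, on which $P$ is arbitrary. Nothing prevents $P(L_i(D'_n))$ from containing $a_j$. Nothing keeps its $B_*$-points inside the finite core $\bigcap_{n'}D'_{n'}$, where every left witness must lie for your family to be unseparated. Passing to a further subsequence restores constancy of the traces, but not the particular trace your count relied on. So ``re-verifying'' the traces is not available. The suggested modification is also empty as stated: in \cref{lem:capture} the avoidance points $x_j$ lie in $C_j\setminus D$, so removing them from $D$ changes nothing. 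Finally, the family $\mathcal C$ to be avoided is never specified, and that choice is precisely the missing idea.

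\textbf{How the paper avoids the circularity.} The paper never builds the bad family by hand.
\begin{enumerate}
\item It uses a single left target per type, chosen after $D$. Its anchor trace $P(L_i(D))\cap A_0$ is simply read off, so no left subsequence is needed.
\item The right targets are $K_j(E_n)$ with $E_n$ the first $n$ points of $A_*$. Any finite $A_*$-part of a left witness is eventually absorbed, so no symmetric capture on the $A$-side is needed.
\item It first fixes every $R_j=P(K_j(E_n))\cap B_0$ along a subsequence. It then sets $U_n=\bigcup_{j<k}P(K_j(E_n))\cap B_*$ over \emph{all} $j$, since the uncovered pair is not yet known.
\item It takes $\mathcal C$ to be the infinite members of the countable collection
$C_{i,S}=\bigcap\{D'\subseteq B_*: P(L_i(D'))\subseteq S\subseteq L_i(D')\}$, over $i<k$ and finite $S$. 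These sets depend only on $P$, not on $D$.
\end{enumerate}

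\textbf{The contradiction.} Suppose a pair $(i,j)$ is uncovered, and take a captured $n$ large enough. Then both $L_i(D)$ and $K_j(E_n)$ are active at $S=P(L_i(D))\cup P(K_j(E_n))$. The entire active family at $S$ is the bad, unseparated family you were trying to construct:
\begin{itemize}
\item mutual containment of witnesses is automatic;
\item its $A$-part lies in the finite set $K_j(E_n)\cap A$;
\item its $B_*$-part lies in $C_{i,S}\subseteq D$, which must be finite because $D$ contains no infinite $C_{i,S}$.
\end{itemize}
This yields the anchor-cover condition for every pair simultaneously. Only then does your $2kd<k^2$ count apply.
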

\begin{proof}
\emph{Upper bound.}
Put $d=\lceil k/2\rceil$. For each row $i$, let $J_i$ be the $d$ cyclic
columns $i,i+1,\ldots,i+d-1$ modulo $k$. Set
\[
 T(L_i(D))=\{a_j:j\in J_i\},\qquad
 T(K_j(E))=\{b_i:j\notin J_i\}.
\]
These positive witnesses have sizes $d$ and $k-d\le d$, respectively,
because each column occurs in exactly $d$ of the sets $J_i$.
Every pair $(i,j)$ is separated: if $j\in J_i$, the first witness contains
$a_j$ omitted by $K_j(E)$; otherwise the second contains $b_i$ omitted by
$L_i(D)$. Hence no opposite-side pair can be simultaneously active.
A nonempty active family shares all $A$ or all $B$, so its intersection is
infinite.

\emph{Lower bound.}
The finite graph alone is not a lower bound: a witness may use tail points
and depend on the entire target. We first choose targets that force all
$k^2$ pairs to use the displayed anchor incidences.
Fix $q\in\N_0$ and an arbitrary valid assignment $T$ with
$|T(L)|\le q$ for every target.
Enumerate $A_*$ and let $E_n$ be its first $n$ points.
For $j<k$, put $K_{j,n}=K_j(E_n)$. Pass to increasing indices $n$ on
which all $k$ anchor sets
\[
 R_j=T(K_{j,n})\cap B_0
\]
are fixed simultaneously. This is possible because their joint pattern
has only finitely many values. On those indices set
\[
 U_n=\bigcup_{j<k}\bigl(T(K_{j,n})\cap B_*\bigr),
 \qquad |U_n|\le kq.
\]
For every $i<k$ and finite sample $S$, form
\begin{align*}
 \mathcal F_i(S)&=\{L_i(D'):T(L_i(D'))\subseteq S\subseteq L_i(D')\},\\
 C_{i,S}&=\bigcap\{D'\subseteq B_*:L_i(D')\in\mathcal F_i(S)\},
\end{align*}
with empty intersection $B_*$. There are only countably many pairs
$(i,S)$. Apply \cref{lem:capture} to the infinite members of this
collection of cores and to $(U_n)$. Obtain $D\subseteq B_*$ capturing
infinitely many $U_n$ and containing none of those infinite cores.
The first property accommodates infinitely many right-witness tails.
The second prevents that accommodation from containing an infinite
active core on the left.

Fix the $k$ targets $L_i(D)$. We claim that every pair $(i,j)$ obeys
\begin{equation}\label{eq:anchor-cover}
 a_j\in T(L_i(D))\quad\text{or}\quad b_i\in R_j.
\end{equation}
If both fail, choose a captured $n$ large enough that
$T(L_i(D))\cap A_*\subseteq E_n$. The target $K_{j,n}$ contains all $B$
and every $A$-anchor except $a_j$, so
$T(L_i(D))\subseteq K_{j,n}$. Conversely $L_i(D)$ contains all $A$;
the $B$-anchor part $R_j$ omits $b_i$, and the $B$-tail part of
$T(K_{j,n})$ lies in $U_n\subseteq D$. Thus
$T(K_{j,n})\subseteq L_i(D)$.

Both targets are active at
$S=T(L_i(D))\cup T(K_{j,n})$. The full active intersection $J$ is infinite.
Since $K_{j,n}$ has only finitely many $A$-points and $B_0$ is finite,
$J\cap B_*$ is infinite. Every target in $\mathcal F_i(S)$ contains $J$,
so $J\cap B_*\subseteq C_{i,S}$. This core is infinite, yet
$C_{i,S}\subseteq D$ because $L_i(D)$ itself is active. This contradicts
the choice of $D$ and proves \eqref{eq:anchor-cover}.

The $k^2$ pairs are therefore covered by at most
\[
 \sum_{i<k}|T(L_i(D))\cap A_0|+\sum_{j<k}|R_j|\le 2kq
\]
anchor incidences. It follows that $q\ge\lceil k/2\rceil$.
The counting step applies to arbitrary tail-dependent assignments because
the capture argument first established \eqref{eq:anchor-cover}; it does not
assume witnesses use anchors only. The definition of width and \cref{prop:separation} give the exact width.
Taking $k=2r$ realizes each positive integer $r$.
\end{proof}

The upper bound assigns each edge of a complete bipartite graph to an
endpoint; capture makes the corresponding incidence budget unavoidable
in the lower bound. For $k=3$, that budget reads $9\le6q$, forcing $q\ge2$.

The full types in Figure~\ref{fig:finite-hierarchy} are essential.
If we retain only the six empty-tail representatives
$L_i(\varnothing),K_j(\varnothing)$, choose $a_*\in A_*$ and $b_*\in B_*$.
Assigning $\{a_*\}$ to every left representative and $\{b_*\}$ to every
right representative separates every cross pair. This six-language
family has width one: these witnesses give the upper bound, and its
empty full intersection rules out width zero. A single tail point now
handles all three constraints incident to a vertex. The full language
class lets the capture argument eliminate this shortcut before counting.

\subsection{The unbounded-finite level}

Let $\X=A\sqcup B$ with both blocks countably infinite, and define
\begin{equation}\label{eq:two-core}
 \cH_{\mathrm{two}}=\{A\cup D:D\subseteq B\}
                   \cup\{B\cup E:E\subseteq A\}.
\end{equation}
\begin{proposition}[The $\omega$ level]\label{thm:omega}
The class $\cH_{\mathrm{two}}$ is generatable in the limit and
$\spwidth(\cH_{\mathrm{two}})=\omega$.
\end{proposition}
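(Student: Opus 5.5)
The plan has two halves: an explicit pointwise finite separating assignment, which gives $\spwidth(\cH_{\mathrm{two}})\le\omega$ and hence generatability by \cref{cor:width-characterization}, and an embedding of the finite-level classes, which rules out every finite width.

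\emph{Upper bound.} Enumerate $A=\{a_1,a_2,\ldots\}$ and $B=\{b_1,b_2,\ldots\}$.
\begin{itemize}
\item For $A\cup D$ with $D\ne B$, let $m(D)$ be the least index with $b_{m(D)}\notin D$, and set $T(A\cup D)=\{a_1,\ldots,a_{m(D)}\}$.
\item Symmetrically, for $B\cup E$ with $E\ne A$, let $m'(E)$ be the least index with $a_{m'(E)}\notin E$, and set $T(B\cup E)=\{b_1,\ldots,b_{m'(E)}\}$.
\item For the single language $\X$, which arises from both sides, set $T(\X)=\varnothing$.
\end{itemize}
All values are finite, but not uniformly bounded.

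To verify separation via \eqref{eq:separation}, take a nonempty bad $\mathcal F$. Its core is finite, so $\mathcal F$ cannot lie entirely on one side. Nor can its members besides $\X$: $\X$ contains every language, so removing it does not change the core. Hence $\mathcal F$ contains some $A\cup D$ with $D\ne B$ and some $B\cup E$ with $E\ne A$. Suppose $T(A\cup D)\subseteq B\cup E$. Then $a_1,\ldots,a_{m(D)}\in E$, so $m'(E)>m(D)$. Therefore $T(B\cup E)$ contains $b_{m(D)}$, which lies outside $A\cup D$. Either way this cross pair is separated, so $\spwidth(\cH_{\mathrm{two}})\le\omega$, and \cref{cor:width-characterization} gives generatability.

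\emph{Lower bound.} The idea is to reduce to \cref{thm:finite-levels} rather than redo the capture argument. Fix $k\ge1$ and split each block into finite anchors and an infinite tail: $A=A_0\sqcup A_*$ with $|A_0|=k$, and $B=B_0\sqcup B_*$ with $|B_0|=k$, both tails countably infinite. Each $L_i(D)=A\cup\bigl((B_0\setminus\{b_i\})\cup D\bigr)$ from \eqref{eq:anchor-left} has the form $A\cup D'$ with $D'\subseteq B$. Symmetrically, each $K_j(E)$ has the form $B\cup E'$ with $E'\subseteq A$. Hence $\cH^{(k)}\subseteq\cH_{\mathrm{two}}$ on the same universe. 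By monotonicity under subfamilies (\cref{prop:width-basic}) and \cref{thm:finite-levels},
\[
\spwidth(\cH_{\mathrm{two}})\ge\lceil k/2\rceil\quad\text{for every }k.
\]
So no finite integer bounds the width, and combined with the upper bound, $\spwidth(\cH_{\mathrm{two}})=\omega$.

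\emph{Main obstacle.} The delicate point is the upper-bound case analysis around the language $\X$ and one-sided subfamilies. A finite witness for $\X$ cannot separate it from every $B\cup E$. The remedy is to note that $\X$ is never needed: separating any bad subfamily reduces to separating a cross pair of non-$\X$ members. I would state this reduction carefully before running the least-missing-index comparison. The lower bound is essentially free once the embedding $\cH^{(k)}\subseteq\cH_{\mathrm{two}}$ is checked.
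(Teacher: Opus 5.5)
Your proof is correct and follows the paper's argument: the upper bound uses the same least-missing-index witnesses, $\{a_1,\ldots,a_{m(D)}\}$ and $\{b_1,\ldots,b_{m'(E)}\}$, with $T(\X)=\varnothing$, and the lower bound uses the same embedding $\cH^{(k)}\subseteq\cH_{\mathrm{two}}$ together with monotonicity and \cref{thm:finite-levels}. The only difference is cosmetic: you verify \eqref{eq:separation} directly on bad subfamilies, while the paper shows that two opposite proper targets are never simultaneously active, and these are equivalent by \cref{prop:separation}.
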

\begin{proof}
Enumerate the blocks as $(a_i)_{i\ge0}$ and $(b_i)_{i\ge0}$.
For a proper target $A\cup D$, put
$i=\min\{r:b_r\notin D\}$ and assign $\{a_0,\ldots,a_i\}$.
For a proper target $B\cup E$, put
$j=\min\{r:a_r\notin E\}$ and assign $\{b_0,\ldots,b_j\}$.
Assign the empty witness to $\X$, the only target on both sides.
Opposite proper targets cannot be active together: containment of the
first witness in the second target forces $j>i$, whereas containment of
the second witness in the first forces $i>j$.
Every nonempty active family therefore shares an infinite block, or
consists only of $\X$. This gives width at most $\omega$.

For any $k$, use the first $k$ points in each block as anchors. The
corresponding $\cH^{(k)}$ is a subfamily of $\cH_{\mathrm{two}}$.
Monotonicity and \cref{thm:finite-levels} give width at least
$\lceil k/2\rceil$ for every $k$. It cannot be finite and must equal
$\omega$.
\end{proof}

Thus the pointwise finiteness in \cref{thm:characterization} cannot be
strengthened to a uniform finite cardinality bound. A stronger fact holds:
along the explicit chain $B\cup\{a_0,\ldots,a_{n-1}\}$, the number of
$B$-points in every valid assignment's witness tends to infinity.
\Cref{prop:chain-divergence} proves this assertion.

\Needspace{11\baselineskip}
\subsection{The endpoints and full range}

\begin{proposition}\label{prop:endpoints}
The family of all cofinite subsets of $\N$ has width $1$, whereas the
family $\infinite{\N}$ of all infinite subsets has width $\omega+1$.
\end{proposition}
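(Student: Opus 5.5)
For the cofinite family, I will prove the upper bound with \cref{thm:singleton} and the lower bound with \cref{prop:width-basic}. The family of cofinite subsets of $\N$ is countable, since each member is determined by its finite complement. So \cref{thm:singleton} gives a separating singleton assignment, and the width is at most one. For the lower bound, width zero would require the full common intersection to be infinite. Here the intersection is empty: for each $x$, the cofinite set $\N\setminus\{x\}$ omits $x$. So the width is exactly $1$.

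For $\infinite{\N}$ I will show that no pointwise finite separating assignment exists; \cref{def:width} then gives width $\omega+1$. Fix an assignment $T$ with finite values $T(L)\subseteq L$. I will build a bad subfamily that $T$ fails to separate, and I try a diagonal recursion first. Start with $L_0=\N$ and $F_0=T(L_0)$. At stage $n$, let $F_n=\bigcup_{m\le n}T(L_m)$, a finite set. Define $L_{n+1}=F_n\cup\{y\in\N:y>\max(F_n\cup\{n\})\}$. This is infinite, contains every earlier witness, and omits every point $\le n$ outside $F_n$. Put $F_\infty=\bigcup_n F_n$.

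I need two properties of the family $\mathcal F=\{L_n\}$:
\begin{enumerate}
  \item \emph{No separation.} Each $T(L_m)$ must lie in every $L_n$. This fails as stated, because $L_n$ for $n\le m$ was built before $T(L_m)$ was known.
  \item \emph{Badness.} $\Core(\mathcal F)$ must be finite. This also fails unless $F_\infty$ itself is finite, since the core contains it.
\end{enumerate}
So this naive recursion breaks, and a cleaner contradiction is needed.

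The better route goes through \cref{thm:characterization}. A finite-valued separating assignment for $\infinite{\N}$ would, by \cref{prop:separation} and \cref{thm:characterization}, give a set-driven generator $g$ for all infinite subsets. I will refute that directly. Build finite sets $S_0\subseteq S_1\subseteq\cdots$ and a target $L=\bigcup_n S_n$ as follows. Start with $S_0=\{0\}$. Given $S_n$, let $z_n=g(S_n)$; if $z_n\notin S_n$, add a new point larger than $z_n$ and $n$, and declare $z_n\notin L$. Points are only added if they exceed every forbidden $z_m$ that is still pending. Then $L$ is infinite and $g(S_n)\notin L\setminus S_n$ for every $n$. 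The text listing $S_0$, then $S_1\setminus S_0$, and so on, makes $g$ fail infinitely often, contradicting generation.

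The main obstacle is the bookkeeping: every $z_m$ must be kept out of $L$ forever, and $L$ must still be infinite. Choosing each new point strictly greater than all earlier outputs handles both. A forbidden $z_m$ is never added later, because later additions exceed it. And $z_m$ was not in $S_m$, hence not in $L$, so the generator errs at stage $m$.
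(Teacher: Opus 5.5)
Your proof is correct and is essentially the paper's: the cofinite case goes through \cref{thm:singleton} and \cref{prop:width-basic}, and the all-infinite case combines the characterization with a diagonal that reveals a new point and permanently forbids any unseen output. The paper diagonalizes an arbitrary total generator directly, so your detour through the set-driven $g$ is harmless but unnecessary, and the abandoned first recursion can be dropped; one small repair is needed, namely that you must add a fresh point at every stage, including when $z_n\in S_n$ (otherwise $S_n$ freezes and $L$ may be finite), and $S_0$ should lie in $\N=\{1,2,\ldots\}$ rather than contain $0$.
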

\begin{proof}
The cofinite family is countable and has empty common intersection, so
\cref{thm:singleton,prop:width-basic} give width exactly one.
For all infinite targets, fix an arbitrary total generator. Maintain
finite disjoint sets of revealed and permanently forbidden points.
At each round reveal a new point outside their union. If the output is
already revealed, freshness fails; otherwise permanently forbid it.
There is always another point available. The resulting input sequence
has infinite range $L$ and is exhaustive for that fixed resulting target.
Every output either repeats an input or is absent from $L$. Thus the
generator fails on $L$. Since the generator was arbitrary,
\cref{thm:width} gives width $\omega+1$.
\end{proof}

\begin{proof}[Completion of \cref{thm:hierarchy}]
The countable bound is \cref{thm:singleton}. A single infinite target has
width zero. Taking $k=2d$ in \cref{thm:finite-levels} realizes every positive
integer $d$, and \cref{thm:omega} realizes $\omega$. These classes are
generatable by \cref{cor:width-characterization}.
The all-infinite family has width $\omega+1$ by \cref{prop:endpoints}.
Bijective relabeling gives the claims on any countably infinite $\X$.
\end{proof}

\begin{table}[htbp]
\centering
\begin{tabular}{@{}lll@{}}
\toprule
Class & Width & Reason\\
\midrule
Infinite common core & $0$ & Empty witnesses\\
Any countable class & $\le1$ & Distinct singleton witnesses\\
Anchored class $\cH^{(k)}$ & $\lceil k/2\rceil$ & Matching incidence bounds\\
Two-core class $\cH_{\mathrm{two}}$ & $\omega$ & Unbounded finite witnesses\\
All infinite subsets & $\omega+1$ & No finite witness assignment\\
\bottomrule
\end{tabular}
\caption{The separation-width hierarchy. The two infinite levels distinguish
unbounded finite certificates from the absence of any finite-certificate
assignment. These widths are not convergence-time bounds.}
\label{tab:widths}
\end{table}

\section{Why local dimensions cannot suffice}
\label{sec:barriers}

The preceding width is global: its assignment must handle every bad
subfamily at once. The following obstruction explains why a dimension
built only from locally realized configurations cannot simply replace it.

\begin{theorem}[Countable-support obstruction]\label{thm:countable-barrier}
There is no invariant $D(\cH)\in\N_0\cup\{\infty\}$ satisfying both
\begin{enumerate}[label=\textup{(\roman*)}]
 \item $\cH$ is generatable in the limit if and only if $D(\cH)<\infty$;
 \item whenever $D(\cH)=\infty$, some countable
       $\cH_0\subseteq\cH$ also has $D(\cH_0)=\infty$.
\end{enumerate}
No monotonicity hypothesis is needed for this impossibility.
\end{theorem}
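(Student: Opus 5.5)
The plan is a proof by contradiction that plays the two hypotheses against the family of all infinite subsets. Suppose $D$ satisfies (i) and (ii). By \cref{prop:endpoints}, equivalently \cref{thm:hierarchy}(iii), the family $\infinite{\X}$ is not generatable in the limit. Condition (i) then gives $D(\infinite{\X})=\infty$. Condition (ii) next yields a countable subfamily $\cH_0\subseteq\infinite{\X}$ with $D(\cH_0)=\infty$. Applying (i) once more, now in the reverse direction, $\cH_0$ is not generatable in the limit. This contradicts \cref{cor:countable}, which says every countable family of infinite languages is generatable; \cref{thm:singleton} together with \cref{thm:width} gives the same conclusion.

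The order of steps is: (1) quote non-generatability of $\infinite{\X}$; (2) use the forward direction of (i) to get infinite $D$; (3) use (ii) to extract a countable witness subfamily; (4) use the backward direction of (i) on that subfamily; (5) contradict with the countable-family theorem. Since $\X$ is arbitrary countably infinite, \cref{prop:endpoints} is transported from $\N$ by bijective relabeling, as in the completion of \cref{thm:hierarchy}.

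Monotonicity of $D$ is never invoked. Condition (ii) is applied only to $\infinite{\X}$, and (i) only to $\infinite{\X}$ and $\cH_0$. This is the content of the last sentence of the statement.

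I expect no genuine obstacle. The only point needing care is to use (i) as a true biconditional on the countable subfamily, so that $D(\cH_0)=\infty$ forces non-generatability of $\cH_0$ rather than merely being consistent with it. Everything else is supplied by the earlier results.
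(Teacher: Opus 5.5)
Your proposal is correct and is essentially the paper's proof: the all-infinite class is non-generatable, so (i) forces $D=\infty$; (ii) then extracts a countable subfamily with $D=\infty$; the countable-family results (the paper cites \cref{thm:singleton,thm:characterization}) make that subfamily generatable, contradicting (i). Working over an arbitrary countably infinite $\X$ instead of $\N$ is a harmless relabeling.
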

\begin{proof}
All infinite subsets of $\N$ form a nongeneratable class by
\cref{prop:endpoints}, so (i) gives infinite dimension. Condition (ii)
then supplies a countable subfamily of infinite dimension. That subfamily
is generatable by \cref{thm:singleton,thm:characterization}, contradicting
(i).
\end{proof}

For a dimension defined by configurations of arbitrarily large finite
depth, condition (ii) follows whenever each depth has at most countably
many language realizers and collecting those realizers preserves the
configuration. Choose one witness family at each integer depth and take
their countable union. Finite VC shattering and finite-depth Littlestone
shattering have this property. A finite positive-closure obstruction also
has countable support: select, for each point outside its finite core, a
consistent language omitting that point. Adding more consistent languages
can only shrink the core.

There is also a finite-level failure of countable determination. For every
integer $d\ge1$, the class $\cH^{(2d+1)}$ has width $d+1$, while each of
its countable subfamilies has width at most one. Thus even the obstruction
$\spwidth(\cH)>d$ need not be witnessed by a countable subclass.

This argument does not cover every infinitary rank. A tree condition
requiring each complete infinite branch to be realized by an actual
target, or a global compatibility condition such as
\eqref{eq:separation}, need not be supported on a countable subfamily.
The theorem's explicit support assumption is essential.

\Needspace{5\baselineskip}
\begin{proposition}[Identical finite profiles]\label{prop:profiles}
The cofinite class $\cH_{\mathrm{cf}}$ and the class
$\cH_{\mathrm{inf}}=\infinite{\N}$ have identical traces on every finite
domain and identical positive closures at every finite sample. Their
generatability in the limit differs.
\end{proposition}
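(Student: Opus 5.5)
The claim has three parts: the traces agree on finite domains, the positive closures agree at finite samples, and generatability differs. I would verify each part directly from the definitions. For a finite domain $Y\subseteq\N$, the trace of a class is $\{L\cap Y:L\in\cH\}$. I will show both classes realize every subset $Z\subseteq Y$.

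For the cofinite class, take $L=Z\cup(\N\setminus Y)$. It is cofinite, and $L\cap Y=Z$. For $\cH_{\mathrm{inf}}$, the same $L$ is infinite and has the same trace. Conversely, every trace of either class is trivially a subset of $Y$. Hence both trace sets equal the full power set of $Y$.

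For the positive closures, fix a finite $S\subseteq\N$. Both version spaces are nonempty, since $\N$ itself is consistent with $S$ in each class. As observed after \cref{cor:euc}, the cofinite closure is exactly $S$: for every $x\notin S$, the cofinite target $\N\setminus\{x\}$ contains $S$ and omits $x$. The same language is an infinite subset of $\N$, so it lies in $\cH_{\mathrm{inf}}$ as well. The closure there therefore also omits every $x\notin S$, while it always contains $S$. So both closures equal $S$. If the intended notion of profile also includes samples with empty version space, that case cannot arise here, because every finite $S$ is contained in $\N$, which belongs to both classes.

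For the difference in generatability, $\cH_{\mathrm{cf}}$ is countable. It is therefore generatable by \cref{cor:countable}, or alternatively by \cref{thm:singleton} together with \cref{cor:width-characterization}. The class $\cH_{\mathrm{inf}}$ is not generatable, by \cref{prop:endpoints}, which gives it width $\omega+1$. I expect no step to be a real obstacle. The only care needed is to match exactly what ``trace'' and ``positive closure'' mean. The key observation is that the single witnessing language $\N\setminus\{x\}$, or $Z\cup(\N\setminus Y)$, lies in both classes, and this makes all the finite data coincide.
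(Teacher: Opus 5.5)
Your proposal is correct and essentially matches the paper's proof. Your language $Z\cup(\N\setminus Y)$ is exactly the paper's $\N\setminus(F\setminus E)$, the closure argument via $\N\setminus\{x\}$ is the same, and the difference in generatability rests on the same countability and endpoint results (the paper cites \cref{prop:endpoints} with \cref{thm:width}).
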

\begin{proof}
For finite $F\subseteq\N$ and $E\subseteq F$, the cofinite target
$\N\setminus(F\setminus E)$ has trace $E$ on $F$. Thus both classes
realize all finite traces. At a finite positive sample $S$, every
consistent target contains $S$, whereas for each $x\notin S$ the
cofinite target $\N\setminus\{x\}$ is consistent and omits $x$.
Both positive closures are exactly $S$. Their differing status follows
from \cref{prop:endpoints,thm:width}.
\end{proof}

Consequently, even the complete collection of finite trace sets or of
finite-sample positive closures cannot determine generation in the limit.
This includes invariants determined solely by these profiles and their
finite iterations; it does not include enriched data that retain
additional relations among the realizing languages.

The width measures the sizes of the witnesses themselves. An alternative
based on the size of bad active samples collapses to zero or infinity;
Appendix~\ref{app:padding} gives the short padding argument.

\section{Scope and computational interpretation}
\label{sec:discussion}

The characterization separates two existence questions.
A successful generator yields a finite witness for every target, and
compatible witnessed targets must have infinitely many common elements.
Conversely, an assignment with this intersection property specifies a
generator. This is a structural existence statement about the family.
The assignment and the intersections in
\eqref{eq:generator-from-witnesses} need not be computable.

The normalization theorem has a separate effective content: given a total
computable successful generator, \cref{alg:normalization} yields a total
computable set-driven replacement. It requires neither effective membership
in the unknown target nor an effective procedure for extracting $T_L$.
The search may examine many words and is not accompanied by an efficiency
guarantee. Set-drivenness removes dependence on order and multiplicity; it
does not bound the memory required to retain the observed set.
Memory constraints define a separate restriction on generation, studied by
\citet{kleinberg2026memory}.

For a single infinite target $L\subseteq\N_0$, the effective locking
conclusion yields an exact criterion: a total computable generator exists
if and only if $L$ contains an infinite computably enumerable subset
(Appendix~\ref{app:computability}). In particular, there is a target of
asymptotic density one with width zero and no computable generator.
Moreover, the countable class of all decidable
density-one targets has width one and individually computable generators,
yet no common total computable generator. These examples make precise the
computational information that witness size alone does not supply.

The assumptions also have distinct roles in the proof. Countability supplies
an exhaustive ordering of target elements and a finite-predecessor ordering
of words. Infinitude supplies fresh outputs and gives checkpoints of every
finite size. Exhaustiveness both rules out an infinite marked error sequence
and ensures that the final finite witness is eventually observed.
The theorem covers precisely these positive-text, single-element generation
requirements; it imposes no identification or output-diversity requirement.

\paragraph{Combinatorial meaning and remaining questions.}
Positive separation width measures the number of positive incidences that
each target must contribute to separate every subfamily with finite core.
The finite anchored examples give matching lower and upper bounds, and the
$\omega$ example separates pointwise finiteness from every uniform finite
budget. The definition still optimizes a global assignment; the present
results do not provide an intrinsic tree-rank formula or a method to compute
the width from a presentation of the class. The countable-support obstruction
identifies a constraint on such a formula rather than prohibiting all
infinitary alternatives.

It remains useful to ask which additional regularity assumptions permit
canonical witness choices, which operations preserve finite width, and what
extra information would relate witness size to observation or mistake
complexity. Exhaustive texts alone allow any finite certificate to be delayed,
so such quantitative conclusions require separate hypotheses. Efficient
witness discovery and active-intersection selection remain separate algorithmic
questions.

\paragraph{Formal verification.}
The characterization and full width hierarchy are checked in Lean. The
simplified normalization has a separately checked implementation, and the
diagonal capture argument has its own checked proof. Appendix~\ref{app:formal}
describes the correspondence, implementation variants, and trusted dependencies.
The additional density-one counterexample and computability results in
Appendices~\ref{app:quantifiers} and~\ref{app:computability} have written
proofs and are outside the current Lean development.

\paragraph{AI Disclosure.}
OpenAI Codex assisted with mathematical exploration, proof development,
Lean formalization, and manuscript preparation. Responsibility for the
paper's claims and interpretations rests with the authors.

\phantomsection
\addcontentsline{toc}{section}{References}
\bibliographystyle{plainnat}   
\bibliography{refs}

\appendix
\section{From sorting failure to successful normalization}
\label{app:sorting}
\label{app:worked-normalization}

We follow one generator through the obstruction to sorting, the internal
rounds of \cref{alg:normalization}, and the confirming witness in its proof.
The same example shows why a valid but unobserved output can delay the
simulation, and why later simulated histories may keep changing after
correctness has been secured.

\subsection{Why sorting fails}
\label{app:sorting-failure}

The normalization theorem constructs a new generator by finite simulation.
The following example shows why simply sorting the input to an existing
successful generator is not a valid replacement.

Let $\X=\N_0$ and let the sole target be $L=\N$. Set $G(\ew)=1$.
For a nonempty finite word $\sigma$, define
\[
 G(\sigma)=
 \begin{cases}
  0,&\sigma\text{ is strictly increasing and }
       \cont(\sigma)\ne\{1,\ldots,\max\cont(\sigma)\},\\
  \max\cont(\sigma)+1,&\text{otherwise}.
 \end{cases}
\]
On every text for $L$, this generator is eventually correct and fresh.
If the entire text is strictly increasing, exhaustiveness forces it to be
$1,2,3,\ldots$, and every output is correct.
If it is not strictly increasing, there is a finite prefix that witnesses
this fact. All later prefixes retain that witness, and their outputs are
the maximum observed value plus one, hence valid and fresh.

Now define $h(S)=G(\operatorname{sort}(S))$, where $\operatorname{sort}(S)$
lists the distinct elements of $S$ in increasing order. On the text
\[
  1,3,2,5,4,7,6,\ldots,
\]
every prefix ending at a newly introduced odd number $2k+1$, for $k\ge1$,
omits $2k$. Sorting that prefix therefore triggers the output $0$.
The generator $h$ makes infinitely many errors.

The example separates these two transformations of a particular generator;
it does not separate sequence-input and set-input generatability.
The singleton family plainly has a successful set-driven generator.

\subsection{Tracing the normalization}
\label{app:normalization-trace}

Use the usual order $0<1<2<\cdots$ on $\X=\N_0$. To make every
least-code choice explicit, order finite words first by their weight
\[
 \operatorname{wt}(\sigma)=\sum_{i=1}^{|\sigma|}(\sigma_i+1),
 \qquad \operatorname{wt}(\ew)=0,
\]
and then lexicographically among words of equal weight. Define
$\code(\sigma)$ to be its zero-based position in this order. Each weight
contains finitely many words: a word of weight $r$ has length at most $r$
and entries at most $r-1$. Thus this is an effective numbering with finite
predecessor sets, as required in \cref{sec:construction}. The weights in
the table below are ordering aids, not the codes themselves; for example,
$\code((1))=3$ and $\code((1,3))=55$.

Let $F$ be the fresh repair in \eqref{eq:fresh-repair}. On words over the
target $L=\N$, the generator $G$ is already fresh: it outputs either $0$
or the maximum input plus one. Hence $F=G$ on every history searched below.
The repair still matters on some words containing $0$, which will enter
the general confirming-witness formula.

Every call with a sample $S$ starts a new simulation at $\ew$ and performs
$|S|$ rounds. The rows of \cref{tab:normalization-trace} are these internal
rounds, not arrival times in an input text. Each block holds $S$ fixed.

\begin{table}[H]
\centering
\small
\renewcommand{\arraystretch}{1.15}
\begin{tabular}{@{}ccclcc@{}}
\toprule
Sample $S$ & Round $k$ & $E_k(S)$ & Selected history $\tau$
 & $\operatorname{wt}(\tau)$ & $F(\tau)$ \\
\midrule
$\{1,3\}$ & 1 & $\{1\}$ & $(1)$ & 2 & 2 \\
 & 2 & $\{1,3\}$ & $(1,3)$ & 6 & 0 \\
\midrule
$\{1,2,3\}$ & 1 & $\{1\}$ & $(1,3)$ & 6 & 0 \\
 & 2 & $\{1,2\}$ & $(1,3,2)$ & 9 & 4 \\
 & 3 & $\{1,2,3\}$ & $(1,3,2,1)$ & 11 & 4 \\
\midrule
$\{1,2,3,4\}$ & 1 & $\{1\}$ & $(1,3)$ & 6 & 0 \\
 & 2 & $\{1,2\}$ & $(1,3,2,4)$ & 14 & 5 \\
 & 3 & $\{1,2,3\}$ & $(1,3,2,4,1)$ & 16 & 5 \\
 & 4 & $\{1,2,3,4\}$ & $(1,3,2,4,1,1)$ & 18 & 5 \\
\bottomrule
\end{tabular}
\caption{Three complete calls to the normalized generator. Each selected
history strictly extends the previous one in its block, contains the
required checkpoint, and has output outside $S$. The last output in each
block is $g(S)$: respectively $0$, $4$, and $5$.}
\label{tab:normalization-trace}
\end{table}

For $S=\{1,3\}$, the first candidate is $(1)$, whose output $2$ is valid
but unobserved. The search tests only whether an output has been seen;
it therefore accepts this candidate. At the second round, the checkpoint
requires both $1$ and $3$, making $(1,3)$ the least eligible extension.
Its output is $0$, so $g(\{1,3\})=0$. A normalized generator can still
err before a locking witness has appeared.

For $S=\{1,2,3\}$, the output $2$ has been confirmed. More precisely,
the complete list of positive words containing $1$ that precede $(1,3)$,
together with their outputs, is
\[
\begin{array}{c|ccccc}
 \tau &(1)&(1,1)&(1,2)&(2,1)&(1,1,1)\\
 F(\tau)&2&2&3&3&2
\end{array}
\]
All are rejected because their outputs belong to $S$. The first selected
history is therefore $(1,3)$, with output $0\notin S$. The second round
must append $2$ to meet $E_2(S)$; the least extension is $(1,3,2)$.
It is no longer increasing, so its output is $4$. In the third round
all sample points are already present in the history, but a strict
extension is still required. Appending the smallest sample point, $1$,
gives $(1,3,2,1)$ and output $4$. Thus $g(\{1,2,3\})=4$.

For $S=\{1,2,3,4\}$, the first selected history remains $(1,3)$.
The former second-round choice $(1,3,2)$ is now rejected, since its
output $4$ has been observed. Any extension meeting $E_2(S)=\{1,2\}$
places $2$ after $3$ and hence is not increasing. To have output outside
$S$, its maximum must therefore be $4$. The least such extension appends
$(2,4)$; it has smaller lexicographic order than the equally weighted
extension appending $(4,2)$. Subsequent rounds append $1$, giving the
last two rows and $g(\{1,2,3,4\})=5$. Confirming another valid output
has changed the simulated history while preserving correctness.

\subsection{The error chain and its witness}
\label{app:normalization-witness}

The proof-only genuine-error chain is particularly short:
\[
 \sigma_0=\ew,\qquad \sigma_1=(1,3),\qquad m=1.
\]
Indeed, $(1,3)$ is the least word over $L$ containing $E_1(L)=\{1\}$
with output outside $L$; the preceding list accounts for all earlier
possibilities. Every strict extension of $(1,3)$ that contains
$E_2(L)=\{1,2\}$ places $2$ after $3$, so its output is the maximum
entry plus one and belongs to $L$. No second genuine error exists.
This is exactly the terminal-safety statement
\eqref{eq:terminal-safety} for this example.

We can also evaluate the full witness formula
\eqref{eq:normalization-witness}. Its confirming outputs are
\[
 \{F(\tau):\code(\tau)<\code((1,3)),\ F(\tau)\in L\}
   =\{1,2,3,4\}.
\]
For completeness, all earlier words have weight at most $6$. Among words
up to this weight, the only one with repaired output greater than $4$
is $(4,0)$, which is later than $(1,3)$. Conversely, $\ew$, $(1)$,
$(1,2)$, and $(3,0)$ are earlier and have outputs $1,2,3,4$,
respectively. Consequently the witness supplied by that formula is
\[
 T_L=\underbrace{\{1,2\}}_{E_2(L)}
       \cup\underbrace{\{1,3\}}_{\cont(\sigma_1)}
       \cup\underbrace{\{1,2,3,4\}}_{\text{confirming outputs}}
     =\{1,2,3,4\}.
\]
The extra confirmation $4$ arises from histories such as $(3,0)$, which
cannot occur over this target. The general formula includes them to avoid
restricting each predecessor set to legal histories. This explains why
the formula supplies a sufficient certificate without claiming a smallest
one.

In fact, a direct argument gives the stronger statement
\begin{equation}\label{eq:worked-locking}
 \{1,2,3\}\subseteq S\subseteq\N,\quad S\text{ finite}
 \quad\Longrightarrow\quad g(S)=\max S+1.
\end{equation}
The first round chooses $(1,3)$, because all five earlier possibilities
have their outputs $2$ or $3$ in $S$. The second round must insert $2$
after $3$, so this and all later histories are not increasing. The final
round includes $E_{|S|}(S)=S$; since the history contains only points of
$S$, its content is exactly $S$. Its repaired output is therefore
$\max S+1\in\N\setminus S$. This proves locking on every finite
extension of the smaller witness $\{1,2,3\}$, not just the samples in
the table.

Thus the example realizes the proof mechanism: positive confirmations
force agreement through the sole genuine error, and a further round
passes into the region of terminal safety. The search never asks whether
an output belongs to $L$. The weights, code choices, and traces here are
fixed for exposition; they are not execution traces of the different
word encoding used by the repository's Lean implementation.
Finally, these are locking witnesses for this particular normalized
generator. The singleton class $\{\N\}$ itself has separation width
zero, since its common intersection is infinite.

\section{Witness divergence along an explicit chain}
\label{app:divergence}

The width-$\omega$ example has a stronger property than the absence of a
uniform bound. Its witnesses must become large along one specified chain,
regardless of how the assignment depends on the targets.

\begin{proposition}\label{prop:chain-divergence}
Let $T$ be any valid finite-witness assignment for
$\cH_{\mathrm{two}}$ in \eqref{eq:two-core}. Enumerate $A$ as
$(a_i)_{i\ge0}$ and put
\[
 E_n=\{a_0,\ldots,a_{n-1}\},\qquad K_n=B\cup E_n.
\]
Then $|T(K_n)\cap B|\to\infty$ as $n\to\infty$.
The symmetric assertion holds with $A$ and $B$ interchanged.
\end{proposition}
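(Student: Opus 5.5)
I would argue by contradiction and reuse, almost verbatim, the capture mechanism from the lower bound in \cref{thm:finite-levels}, now with $D\subseteq B$ playing the role of the left tail. Suppose the conclusion fails. Then there are $q\in\N_0$ and an infinite set of indices $n$ with $|T(K_n)\cap B|\le q$. Along those indices put $U_n=T(K_n)\cap B$, a bounded sequence of finite subsets of $Y=B$. Note that $T(K_n)\cap A\subseteq K_n\cap A=E_n$ holds automatically, so the only uncontrolled part of a right witness is $U_n$.

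For each finite $S\subseteq\X$, consider the left-side targets active at $S$ and their $B$-core:
\[
 C_S=\bigcap\{D'\subseteq B:\ T(A\cup D')\subseteq S\subseteq A\cup D'\},
\]
with empty intersection read as $B$. Here I include $D'=B$, so the target $\X$ is treated as a left-side target. There are countably many $S$. I would apply \cref{lem:capture} to $(U_n)$ and to the at-most-countable family of those $C_S$ that are infinite. This yields $D\subseteq B$ with $U_n\subseteq D$ for infinitely many $n$ and $C_S\not\subseteq D$ for every infinite $C_S$. Set $L=A\cup D\in\cH_{\mathrm{two}}$; this is legitimate even if $D=B$. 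Since $T(L)$ is finite, choose a captured $n$ large enough that $T(L)\cap A\subseteq E_n$.

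Next I would check mutual containment. First, $T(L)\subseteq K_n$, because $K_n\supseteq B\cup E_n$. Second, $T(K_n)\subseteq L$, because its $A$-part lies in $E_n\subseteq A$ and its $B$-part is $U_n\subseteq D$. Hence both $L$ and $K_n$ are active at $S=T(L)\cup T(K_n)$. By the validity of $T$, the full active intersection $J$ is infinite. Since $J\subseteq K_n$ and $K_n\cap A=E_n$ is finite, $J\cap B$ is infinite.

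Every active left target $A\cup D'$ contains $J$, so $J\cap B\subseteq C_S$, and therefore $C_S$ is infinite. But $L$ itself is active at $S$, so $C_S\subseteq D$. This contradicts the choice of $D$. The symmetric assertion follows by interchanging the roles of $A$ and $B$.

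The main obstacle is the one already met in \cref{thm:finite-levels}: $T(L)$ may depend on all of $D$, so one cannot fix $L$ first and then pick $n$ freely. The capture lemma resolves this by building $D$ so that it simultaneously absorbs infinitely many right tails $U_n$ and avoids every infinite left core. The remaining care is bookkeeping: $C_S$ must range over all left targets, including $\X$, so that the core-avoidance property actually applies to the active family at $S$.
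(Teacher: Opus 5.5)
Your proposal is correct and follows the paper's proof essentially step for step. You take a bounded infinite subsequence of $T(K_n)\cap B$ and apply \cref{lem:capture} against the countably many infinite left-side active cores $C_S$. You then derive the contradiction from the simultaneous activity of $A\cup D$ and $K_n$ at $S=T(A\cup D)\cup T(K_n)$, as the paper does.
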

\begin{proof}
Suppose some infinite subsequence of
$Q_n=T(K_n)\cap B$ has bounded cardinality. For each finite sample $S$,
let $\mathcal F_A(S)$ consist of its active targets of the form $A\cup D'$,
and set
\[
 C_S=\bigcap\{D'\subseteq B:A\cup D'\in\mathcal F_A(S)\},
\]
with empty intersection $B$. There are only countably many such cores.
Apply \cref{lem:capture} to the bounded subsequence and to the infinite
members of this core collection. It supplies $D\subseteq B$ capturing
infinitely many $Q_n$ and containing no infinite $C_S$.

For the target $L=A\cup D$, the finite set $T(L)\cap A$ lies in $E_n$
for all sufficiently large $n$. Choose such a captured index. Positivity
gives $T(L)\subseteq K_n$ and $T(K_n)\subseteq L$, so both targets are
active at $S=T(L)\cup T(K_n)$. Their full active intersection $J$ is
infinite. Since $J\subseteq B\cup E_n$ and $E_n$ is finite,
$J\cap B$ is infinite. It is contained in $C_S$, while the active target
$L$ gives $C_S\subseteq D$. This contradicts the construction of $D$.
Thus no bounded-cardinality infinite subsequence exists, which is exactly
the asserted divergence. The symmetric proof interchanges the blocks.
\end{proof}

This argument gives no universal rate of divergence and no generation-time
bound. It concerns the cardinalities of the certificates assigned to a
specific increasing sequence of languages.

\section{Why optimizing bad-sample size collapses}
\label{app:padding}

A different shortcut fails for a quantitative reason. One might assign
finite witnesses first, measure the largest bad active sample, and then
optimize that measurement. The optimization removes every finite defect.

\begin{proposition}[Padding collapse]\label{prop:padding-collapse}
For a finite positive assignment $T$, define
\[
 b_T(\cH)=\sup\bigl\{|S|+1:\activefamily(S)\ne\varnothing,
               \ |\Core(\activefamily(S))|<\infty\bigr\},
\]
with empty supremum zero and unbounded supremum $\infty$. Then
\[
 \inf_T b_T(\cH)=
 \begin{cases}0,&\cH\text{ generatable},\\
 \infty,&\cH\text{ nongeneratable}.
 \end{cases}
\]
\end{proposition}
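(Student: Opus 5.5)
We prove the two cases separately. Both reduce to \cref{thm:characterization}: a finite assignment satisfies \eqref{eq:witness} exactly when no sample $S$ is bad, that is, when no $S$ has a nonempty active family with finite core.

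\emph{Generatable case.} By \cref{thm:characterization}, fix a finite assignment $T$ satisfying \eqref{eq:witness}. Then no finite $S$ has $\activefamily(S)\ne\varnothing$ together with $|\Core(\activefamily(S))|<\infty$. The set inside the supremum is empty, so $b_T(\cH)=0$, and the infimum is $0$ since $b_T\ge0$ always. No padding is needed here.

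\emph{Nongeneratable case.} Fix an arbitrary finite positive assignment $T$. We must show $b_T(\cH)=\infty$, i.e.\ that bad active samples exist of every size. By \cref{thm:characterization}, $T$ violates \eqref{eq:witness}, so some finite $S_0$ has a nonempty active family $\mathcal F_0=\activefamily(S_0)$ with finite core $C_0$. Now pad. For $N\in\N$, form the modified assignment $T_N(L)=T(L)\cup E_N(L)$, using the first $N$ points of $L$ in a fixed enumeration of $\X$. Each $T_N$ is still finite and positive, so by \cref{thm:characterization} it is also invalid. Hence there is a finite $S_N$ with $\mathcal A_{T_N}(S_N)$ nonempty and having finite core.

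Every active $L$ for $T_N$ at $S_N$ satisfies $E_N(L)\subseteq S_N$, so $|S_N|\ge N$. Since $T\subseteq T_N$ pointwise, each such $L$ is also $T$-active at $S_N$, giving $\mathcal A_{T_N}(S_N)\subseteq\activefamily(S_N)$. The $T$-active family at $S_N$ is therefore nonempty, and its core is contained in the core of the smaller family, hence finite. So $S_N$ is a bad active sample for $T$ of size at least $N$. Letting $N\to\infty$ gives $b_T(\cH)=\infty$ for every $T$, and the infimum is $\infty$.

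The key step is the padding transfer: enlarging witnesses shrinks active families, which can only enlarge the core and keep it finite. The direction that matters is that $T_N$-activity implies $T$-activity at the same sample. That gives exactly the inclusion we need: the $T$-active family is larger, so its core is smaller and stays finite, while $|S_N|\ge N$ comes for free from the padded witness.
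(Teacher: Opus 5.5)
Your proof is correct and is essentially the paper's padding argument in contrapositive form. The paper assumes $b_T(\cH)=d<\infty$ and pads every witness to size at least $d$ to obtain a valid assignment; you instead pad to every size $N$ and transfer the resulting bad samples back to $T$ via the same inclusion $\mathcal A_{T_N}(S_N)\subseteq\activefamily(S_N)$. Your version shows directly that every $T$ has $b_T(\cH)=\infty$ in the nongeneratable case, so it does not need the paper's remark that a finite infimum over $\N_0\cup\{\infty\}$ is attained.
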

\begin{proof}
Suppose $b_T(\cH)=d<\infty$. Enlarge each $T(L)$ within its infinite
target to a finite $T'(L)$ of size at least $d$, retaining $T(L)$.
If a sample $S$ activates a target under $T'$, then $|S|\ge d$ and the
original active family is nonempty. Its core cannot be finite, since
that would imply $|S|+1\le d$. The new active family is smaller, so its
core contains the old infinite core. Thus $T'$ is valid and
$b_{T'}(\cH)=0$.
A finite infimum over $\N_0\cup\{\infty\}$ entails some finite-valued
assignment and hence, by this argument, a zero-valued one. Conversely a
valid assignment has value zero. Apply \cref{thm:characterization}.
\end{proof}

Positive separation width retains the size cost of padding and therefore
escapes this collapse. \Cref{thm:finite-levels,thm:omega} show that this
cost has every finite level as well as a necessary unbounded-finite level.

\section{Why the quantifiers matter}
\label{app:quantifiers}

The finite-witness condition fixes one assignment before testing any
sample. Its force comes from compatibility across the whole active
family. A single counterexample shows why neither finite intersections
nor separate assignments on countable subfamilies can replace that
requirement.

\paragraph{Finite intersections versus the full intersection.}
Write $[0,n)=\{0,\ldots,n-1\}$ and let
\[
 \mathcal D_1=\left\{L\subseteq\N_0:
 \lim_{n\to\infty}\frac{|L\cap[0,n)|}{n}=1\right\}
\]
be the family of languages of asymptotic density one. Every finite
intersection of its members still has density one, since its complement
is a finite union of density-zero sets.

\begin{proposition}[Abundant finite intersections do not suffice]
\label[proposition]{prop:density-one}
The class $\mathcal D_1$ is not generatable in the limit. For the empty
witness assignment, every nonempty finite subfamily of every active
family has infinite intersection, but the full active intersection at
each finite sample $S$ is exactly $S$.
\end{proposition}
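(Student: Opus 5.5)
We will prove the three assertions in the order: finite intersections, then the full active intersection, then non-generatability. Throughout, $T(L)=\varnothing$, so at a finite $S$ the active family is $\{L\in\mathcal D_1:S\subseteq L\}$. This family is nonempty for every finite $S$, since $\N_0$ itself contains $S$ and has density one.

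\emph{Finite intersections.} Let $L_1,\dots,L_r$ be members of the active family. Each complement $\N_0\setminus L_i$ has density zero, and a finite union of density-zero sets has density zero, by subadditivity of $|\,\cdot\cap[0,n)|/n$. Hence $\bigcap_i L_i$ has density one. A density-one set is infinite, because a finite set has density zero.

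\emph{Full intersection.} Every active language contains $S$, so the full intersection contains $S$. For the reverse inclusion, take $x\notin S$. The language $\N_0\setminus\{x\}$ has density one and contains $S$, so it is active and omits $x$. Therefore the full active intersection is exactly $S$, which is finite.

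\emph{Non-generatability.} This is the main step. Assume $\mathcal D_1$ is generatable. By \cref{thm:characterization} it then admits a finite witness assignment $T$ satisfying \eqref{eq:witness}. We aim to contradict this via \cref{prop:separation}, by exhibiting an unseparated bad subfamily: for each $x\notin\N_0$-finite set $C$ we choose $K_x\in\mathcal D_1$ with $x\notin K_x$ and $T(K_x)\subseteq C$. The difficulty is that $T$ depends arbitrarily on the whole target, so the argument has to be diagonal.

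The tool is that density-one sets are closed under finite modification. Changing finitely many points of $L$ gives a set $L'$ that is still in $\mathcal D_1$, but $T(L')$ may differ from $T(L)$. So the plan is to imitate the adversary proof of \cref{prop:endpoints} directly against a generator, instead of working with witnesses.

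Fix any generator $G$. We build a text in stages. We maintain a finite set of revealed points and a sparse set of forbidden points, where the forbidden set is kept of density zero, for example by allowing at most one forbidden point in each dyadic block $[2^k,2^{k+1})$. At each stage we feed the next unrevealed non-forbidden point, in increasing order. If the current output is an unrevealed point, we add it to the forbidden set, provided its dyadic block does not already contain a forbidden point. The danger is that $G$ could output valid fresh points inside blocks that are already used.

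To handle this, at each stage we repeat the current content until $G$'s output either repeats an input or lands in a block we are still allowed to forbid. If $G$ avoids this forever while the content stays fixed, that text is itself a text for a different target. So we argue instead with finite-variant targets $S\cup(\N_0\setminus F)$, and use \cref{thm:normalization} to get a set-driven $g$ with locking witnesses.

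The key point is that the union of the locking sets over the limit language $L$ is a single finite set $T_L$. Take $L$ to be the resulting density-one set. Extending $T_L$ to $S=T_L\cup\{$revealed points up to a large block$\}$, the output $g(S)$ must lie in $L\setminus S$. We then alter $L$ by forbidding $g(S)$ in a fresh block, which changes $T$. Iterating this gives a density-one limit on which $g$ errs infinitely often.

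The main obstacle is making this diagonalization converge: each modification must preserve the earlier locking failures. To achieve that, the forbidden points are placed in increasingly sparse blocks so that the limit set still has density one.
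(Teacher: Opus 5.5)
Your first two parts are correct and match the paper's argument. Finitely many density-one sets have a density-zero union of complements, and $\N_0\setminus\{x\}$ is an active target omitting each $x\notin S$.

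The gap is in the non-generatability part, which never becomes a proof. Three separate routes are started and none is finished:
\begin{itemize}
\item The separation route is left unfinished.
\item The ``repeat the current content'' fix can stall forever. That leaves a sequence with finite range, which is not a text for any member of $\mathcal D_1$ (you note this yourself).
\item The final diagonalization through locking witnesses is circular. The witness $T_L$ belongs to the limit language, but each later alteration of $L$ produces a different target whose witness need not be related to the old one. Nothing ensures that errors forced at earlier stages survive in the limit. You name this convergence as ``the main obstacle'' without resolving it.
\end{itemize}
Locking witnesses and the normalization theorem are not needed here. A direct adversary against $G$ suffices.

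The missing idea is to count forbidden points by stages rather than impose a one-per-block quota on every output. The paper proceeds as follows.
\begin{itemize}
\item At stage $r$ it appends, in increasing order, every unseen point of $[0,2^r]\setminus Q_{r-1}$.
\item It queries $G$ once and forbids the output if that output has not appeared.
\item All unforbidden points up to $2^r$ have now been shown, so a newly forbidden point exceeds $2^r$. At most one point is forbidden per stage, so $|Q\cap[0,n)|\le\log_2 n$ wherever $G$ aims.
\item The blocks are nonempty, and their concatenation is a text for $L=\N_0\setminus Q\in\mathcal D_1$.
\item Every checkpoint output is either repeated or lies in $Q$, so $G$ errs infinitely often.
\end{itemize}

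Your own first rule already works, and the danger you flagged is harmless:
\begin{itemize}
\item Every forbidding is an error. So if $G$ erred only finitely often, $Q$ would be finite, and the union $U$ of the dyadic blocks meeting $Q$ would be a finite set.
\item Your increasing enumeration of unforbidden points eventually reveals all of $U\setminus Q$.
\item After that, a correct output must be unrevealed and outside $Q$, hence outside $U$. Its block then contains no forbidden point, so your rule forbids it, which contradicts correctness.
\end{itemize}
Since your quota keeps $Q$ of density zero, this shows $G$ fails on a text for a density-one set. The detour was unnecessary.
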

\begin{proof}
Every active target contains $S$. For each $y\notin S$, the cofinite
target $\N_0\setminus\{y\}$ is active and omits $y$. This proves the
intersection assertion.

Fix any total generator $G$. Starting with $\sigma_0=\ew$ and
$Q_0=\varnothing$, build nested finite histories and forbidden sets.
At stage $r\ge1$, append in increasing order every previously unseen
point of $[0,2^r]\setminus Q_{r-1}$, obtaining $\sigma_r$.
If $y_r=G(\sigma_r)$ has already appeared, put $Q_r=Q_{r-1}$;
otherwise put $Q_r=Q_{r-1}\cup\{y_r\}$.
Observed and forbidden points remain disjoint. Every newly forbidden
point exceeds $2^r$, since all smaller nonforbidden points have just
been shown, and at most one point is newly forbidden at each stage.
For $Q=\bigcup_rQ_r$ and $n\ge2$,
\[
 |Q\cap[0,n)|\le |\{r\ge1:2^r<n\}|\le\log_2 n.
\]
Thus $L=\N_0\setminus Q$ has density one.

The first block is $0,1,2$. For $r\ge2$, the new interval
$(2^{r-1},2^r]$ contains $2^{r-1}>r-1$ points, of which at most
$r-1$ are forbidden, so every block is nonempty. Every point of $L$
eventually appears. The concatenated blocks therefore form a text for
$L$, with strictly increasing checkpoint times $|\sigma_r|$.
At each checkpoint the output is either already observed or belongs to
$Q$, so $G$ fails infinitely often on this text.
\end{proof}

By \cref{thm:characterization}, no finite-witness assignment can repair
this class. Thus finite intersections fail even as a replacement
existence criterion, not merely as a test of the empty assignment.

\paragraph{Separate local assignments versus one global assignment.}
For a \emph{fixed} assignment, countable bad subfamilies suffice to test
separation by \cref{prop:separation}. But the implication
\[
 \bigl[\forall\text{ countable }\mathcal K\subseteq\cH\
       \exists T_{\mathcal K}\text{ valid on }\mathcal K\bigr]
 \quad\Longrightarrow\quad
 \bigl[\exists T\text{ valid on }\cH\bigr]
\]
is false. Take $\cH=\mathcal D_1$: every countable subfamily admits
singleton witnesses by \cref{thm:singleton}, whereas
\cref{prop:density-one} rules out a global finite assignment. Even a
size-one bound on all the separate choices does not make them compatible.

The same distinction applies to generators. For each infinite
$L\subseteq\N_0$, the total set function $g_L(S)=\min(L\setminus S)$
always outputs a fresh point of $L$. Thus $\forall L\,\exists g_L$ holds
for every class; generation requires $\exists g\,\forall L$ instead.
\Cref{thm:normalization} preserves this order: the same $g$ is constructed
from $G$, before choosing the target-specific locking witnesses.
The functions $g_L$ need not be computable; Appendix~\ref{app:computability}
identifies the exact obstruction for a singleton.

\paragraph{Positive inputs versus exhaustive texts.}
Let $E=\{2,4,6,\ldots\}$, $O=\{1,3,5,\ldots\}$, and set
$L_E=\{0\}\cup E$, $L_O=\{0\}\cup O$.
These targets are generatable from exhaustive texts: after the first
nonzero observation, output the least unseen point of its parity.
Before then output $1$, and define the rule arbitrarily on inconsistent
histories containing both parities. An exhaustive text eventually shows
a nonzero point; equivalently, $\{2\}$ and $\{1\}$ are valid witnesses.

If inputs need only belong to the target, the constant stream
$0,0,\ldots$ is compatible with both languages. A common generator
would eventually have to output fresh points in both targets, yet
\[
 (L_E\setminus\{0\})\cap(L_O\setminus\{0\})=E\cap O=\varnothing.
\]
Exhaustiveness ensures that every finite positive witness eventually
appears. Positivity alone does not.

\Needspace{9\baselineskip}
\section{Computable generation versus separation width}
\label{app:computability}

Width records compatibility and cardinality of witnesses. Computability
asks whether fresh valid points can be obtained by a finite program.
The distinction already appears for a single target, whose width is
always zero.

Here the universe is $\N_0$ with its standard effective coding. Finite
words and finite sets have effective encodings, and a generator must be
total computable on \emph{all} finite histories. It receives no
membership oracle, target index, or correctness feedback. Success is
required on every exhaustive text, including noncomputable texts.
A set is \emph{computably enumerable} (c.e.) if an algorithm lists its
members; a set is \emph{decidable} if its membership predicate is total
computable.

\Needspace{19\baselineskip}
\begin{samepage}
\subsection{An exact criterion for one target}

\begin{proposition}[Computable generation for a singleton]
\label[proposition]{prop:computable-singleton}
For an infinite $L\subseteq\N_0$, the following are equivalent:
\begin{enumerate}[label=\textup{(\roman*)}]
 \item A total computable generator generates $L$ in the limit.
 \item There are a total computable
       $g:\fin{\N_0}\to\N_0$ and a finite $T\subseteq L$ such that
       \[
        T\subseteq S\subseteq L,\quad S\text{ finite}
        \quad\Longrightarrow\quad g(S)\in L\setminus S.
       \]
 \item $L$ contains an infinite c.e.\ subset.
 \item $L$ contains an infinite decidable subset.
\end{enumerate}
Moreover, whenever these conditions hold, some total computable
set-driven generator is correct and fresh on every finite positive
sample, with no initial errors.
\end{proposition}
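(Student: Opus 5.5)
I will prove the cycle (i)$\Rightarrow$(ii)$\Rightarrow$(iii)$\Rightarrow$(iv)$\Rightarrow$(i), and obtain the ``moreover'' clause from the construction in the last step.

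\emph{(i)$\Rightarrow$(ii).} Apply \cref{cor:computable} to a total computable generator $G$ for $L$. It yields a total computable set-driven $g$ and, by \cref{thm:normalization}, a finite $T=T_L\subseteq L$ with the locking property \eqref{eq:locking}. This is exactly (ii).

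\emph{(ii)$\Rightarrow$(iii).} This is the step I expect to carry the real content. I would iterate the locked generator from the witness itself. Since $T$ is finite, it is a fixed finite parameter, so the following is effective. Put $S_0=T$ and $S_{r+1}=S_r\cup\{g(S_r)\}$. By induction, each $S_r$ is a finite set with $T\subseteq S_r\subseteq L$. Locking then gives $g(S_r)\in L\setminus S_r$, so each step adds a new point of $L$. The set $\bigcup_r S_r$ is therefore an infinite subset of $L$, and it is c.e.\ because the recursion is computable. The point to check is that no membership test for $L$ is needed: every step stays inside $L$ by the locking implication alone, so the nonuniformity lives entirely in the finite constant $T$.

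\emph{(iii)$\Rightarrow$(iv).} This is the standard fact that every infinite c.e.\ set contains an infinite decidable subset. Enumerate the c.e.\ set $W$ and keep only the elements that exceed everything kept so far. The kept elements form an increasing sequence $w_0<w_1<\cdots$, which is infinite because $W$ is infinite. Membership in this subset is decidable: to test $x$, run the enumeration until a kept element is $\ge x$, then compare.

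\emph{(iv)$\Rightarrow$(i) and the ``moreover'' clause.} Let $R\subseteq L$ be infinite and decidable, and set $g(S)=\min(R\setminus S)$. This $g$ is total computable: search upward, testing membership in $R$, and the search terminates because $R$ is infinite and $S$ is finite. For every finite $S\subseteq L$, including $S=\varnothing$, the output lies in $R\setminus S\subseteq L\setminus S$. So $g$ is correct and fresh on every finite positive sample, with no initial errors. It therefore generates $L$ in the limit with $t_0=1$ on every text, which gives (i) as well as the clause.
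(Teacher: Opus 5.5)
Your proposal is correct and follows the paper's proof essentially step for step. It uses the same cycle of implications: normalization with its computability corollary for (i)$\Rightarrow$(ii), the bootstrap recursion seeded at the fixed finite constant $T$ for (ii)$\Rightarrow$(iii), the strictly increasing subsequence of an enumeration for (iii)$\Rightarrow$(iv), and the minimum selector $S\mapsto\min(R\setminus S)$ for (iv)$\Rightarrow$(i) and the final clause. The only detail you left implicit is that (i) concerns sequence-input generators, so you should explicitly pass to $G(\sigma)=g(\cont(\sigma))$, which is computable.
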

\end{samepage}
\begin{proof}
\emph{(i)$\Rightarrow$(ii).}
Apply \cref{thm:normalization,cor:computable} with the standard
effective coding of $\N_0$. They supply one total computable set-driven
generator and a finite locking witness for this target.

\emph{(ii)$\Rightarrow$(iii).}
Fix one such finite $T$ and define recursively
\begin{equation}\label{eq:bootstrap}
 S_0=T,\qquad z_n=g(S_n),\qquad S_{n+1}=S_n\cup\{z_n\}.
\end{equation}
Inductively, $T\subseteq S_n\subseteq L$, so
$z_n\in L\setminus S_n$. The points $z_0,z_1,\ldots$ are therefore
distinct members of $L$. The recursion is computable: $T$ is a fixed
finite constant, and $g$ is total computable. Its outputs enumerate
an infinite c.e.\ subset of $L$.

\emph{(iii)$\Rightarrow$(iv).}
Given an enumeration of an infinite c.e.\ subset $W\subseteq L$,
wait for its first output $b_0$. Having found $b_n$, continue the
enumeration until a value $b_{n+1}>b_n$ appears. An infinite subset
of $\N_0$ is unbounded, so every search terminates. The resulting
sequence is computable and strictly increasing. Its range $B$ is
decidable: to decide whether $x\in B$, compute the sequence until
the first term at least $x$, then test equality.

\emph{(iv)$\Rightarrow$(i).}
For an infinite decidable $B\subseteq L$, set
\[
 g_B(S)=\min(B\setminus S),\qquad G_B(\sigma)=g_B(\cont(\sigma)).
\]
Membership in $B$ is decidable, and a finite $S$ cannot exhaust $B$,
so this defines total computable functions. Every output belongs to
$B\setminus S\subseteq L\setminus S$, proving both (i) and the final
assertion.
\end{proof}

The finite seed in \eqref{eq:bootstrap} is legitimate precisely because
the target is fixed. The proposition asserts the existence of a program;
it does not provide a procedure that recovers $T$ from an arbitrary
description of $L$ or $G$. Every particular finite set can be included
in a program as a constant. This use of finite advice is an existence
argument, separate from the established effective transformation
$G\mapsto g$.

Nor does the recursion pretend that its self-generated points exhaust
$L$. The normalization theorem supplies correctness on \emph{every}
finite extension of $T$ inside $L$. That stronger locking property is
what allows the outputs to be fed back as inputs.

\subsection{Width zero can coexist with a computational obstruction}

An infinite set is called \emph{immune} when it contains no infinite
c.e.\ subset; see, for example, \citet[Definition~1.1]{case2025immune}.
Thus \cref{prop:computable-singleton} says that an infinite target
admits computable generation exactly when it is not immune.
The obstruction can occur even when the target has density one.

\Needspace{10\baselineskip}
\begin{proposition}[A dense singleton with no computable generator]
\label[proposition]{prop:immune}
There exists a co-c.e.\ set $I\subseteq\N_0$ of density one such that
\[
 \spwidth(\{I\})=0
 \qquad\text{and}\qquad
 \{I\}\text{ has no total computable generator}.
\]
Here co-c.e.\ means that the complement is c.e.
\end{proposition}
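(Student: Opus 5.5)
By \cref{prop:computable-singleton}, it suffices to build a co-c.e.\ set $I\subseteq\N_0$ of density one that is \emph{immune}, i.e.\ contains no infinite c.e.\ subset. The width claim is then immediate: $\{I\}$ is a single infinite target, so its full common intersection is infinite and \cref{prop:width-basic} gives $\spwidth(\{I\})=0$. The plan is a simple-set construction in the style of Post, with a density budget added.

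Fix a standard enumeration $W_0,W_1,\ldots$ of all c.e.\ sets. Enumerate a set $Q$, the intended complement of $I$, by a dovetailed process. For each $e$, wait until some $x\in W_e$ appears with $x>2^{e+2}$. If one appears, put the first such $x$ into $Q$ and satisfy requirement $e$ permanently. Each requirement contributes at most one point, so $Q$ is c.e. Set $I=\N_0\setminus Q$; then $I$ is co-c.e.

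Two verification steps follow.

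\textbf{Immunity.} Suppose $W_e\subseteq I$ is infinite. Then $W_e$ contains a point exceeding $2^{e+2}$, so requirement $e$ eventually acts. It puts some $x\in W_e$ into $Q$, and this contradicts $W_e\subseteq I$.

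\textbf{Density.} Any point of $Q$ below $n$ comes from a requirement $e$ with $2^{e+2}<n$. There are at most $\log_2 n$ such requirements, so
\[
|Q\cap[0,n)|\le\log_2 n=o(n).
\]
Hence $I$ has density one. The same estimate shows that $I$ is infinite.

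Finally, apply the equivalence (i)$\Leftrightarrow$(iii) of \cref{prop:computable-singleton}. Since $I$ has no infinite c.e.\ subset, no total computable generator generates $I$ in the limit, and so $\{I\}$ has no total computable generator.

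The main point needing care is the threshold $2^{e+2}$. It must be large enough to make the density count work, yet it must not weaken immunity. It does not: an infinite c.e.\ set always has elements beyond any fixed bound, so every infinite $W_e$ still meets $Q$. Only the "first such $x$" rule has to be stated carefully, so that each requirement acts at most once in the effective enumeration of $Q$.
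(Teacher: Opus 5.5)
Your proposal is correct and follows the paper's proof: the same Post-style construction with one point above an exponential threshold from each $W_e$, the same $\log_2 n$ density bound, and the same appeal to \cref{prop:computable-singleton}, with width zero coming from the single infinite target. Your threshold $2^{e+2}$ in place of the paper's $2^{e+1}$ makes no difference.
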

\begin{proof}
List all c.e.\ sets as $W_0,W_1,\ldots$, allowing repetitions.
Dovetail their enumerations. For each $e$, if a point
$x>2^{e+1}$ ever appears in $W_e$, place the first such point into
a set $Q$ and take no further action for that index $e$.
This enumerates $Q$, with at most one contribution from each index.
For $n\ge2$,
\[
 |Q\cap[0,n)|\le |\{e\ge0:2^{e+1}<n\}|\le\log_2 n.
\]
Hence $I=\N_0\setminus Q$ is co-c.e., infinite, and has density one.

Every infinite $W_e$ eventually enumerates a point above $2^{e+1}$,
and the construction places such a point in $Q$.
Thus no infinite c.e.\ set is contained in $I$: the set $I$ is immune.
Apply \cref{prop:computable-singleton}. Its singleton width is zero
because the empty witness is valid for an infinite singleton target.
\end{proof}

Restricting success to computable texts would miss this example:
$I$ has no computable exhaustive text, since the range of such a text
would be an infinite c.e.\ subset of $I$.

The obstruction is the absence of an effectively accessible infinite
subset, rather than undecidability of the whole language. For example,
if $A\subseteq\N_0$ is any undecidable set, then
\[
 L_A=\{2n:n\in\N_0\}\cup\{2n+1:n\in A\}
\]
is undecidable, yet the least unseen even number is always a computable
valid output. Both $\{L_A\}$ and $\{I\}$ have width zero; only the former
has a computable generator.

\subsection{Individual algorithms need not combine}

The singleton criterion does not characterize computable generation
of a family by applying it to each target separately.

\begin{corollary}[A countable computational obstruction]
\label[corollary]{cor:computable-family}
Let $\mathcal R_1$ be the family of all decidable density-one subsets
of $\N_0$. Then $\spwidth(\mathcal R_1)=1$ and every member has a
computable generator, but no single total computable generator works
for the whole family.
\end{corollary}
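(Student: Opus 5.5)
The corollary has three parts, and I would prove them in order: first the width, then individual computability, and finally the impossibility of a common computable generator, which is a diagonalization in the style of \cref{prop:density-one}.

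For the width, note that $\mathcal R_1$ is countable, because each decidable set is given by a program. By \cref{thm:singleton} (equivalently part (i) of \cref{thm:hierarchy}) its width is therefore at most one. To rule out width zero, I would use \cref{prop:width-basic}: width zero holds exactly when the full common intersection is infinite. For each $y\in\N_0$, the cofinite set $\N_0\setminus\{y\}$ is decidable and has density one, and it omits $y$. So the common intersection of $\mathcal R_1$ is empty, and the width is exactly one.

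Individual computability is immediate. A member $L\in\mathcal R_1$ is itself an infinite decidable subset of $L$, so condition (iv) of \cref{prop:computable-singleton} holds. Concretely, $g_L(S)=\min(L\setminus S)$ is a total computable generator for $L$.

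The main step is to show that no single total computable $G$ works for all of $\mathcal R_1$. I would fix such a $G$ and rerun the construction from the proof of \cref{prop:density-one}. At stage $r$, append in increasing order the unseen points of $[0,2^r]\setminus Q_{r-1}$ to get $\sigma_r$. Then compute $y_r=G(\sigma_r)$, and forbid $y_r$ if it has not yet appeared. Take $L=\N_0\setminus Q$. That proof already shows that $L$ has density one, that the blocks form a text for $L$, and that $G$ fails at every checkpoint. The only new point is that $L$ must be decidable, so that $L\in\mathcal R_1$.

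Decidability is where I expect the effort to lie. The construction is computable because $G$ is total computable. Moreover, every point forbidden at stage $r$ exceeds $2^r$, so after stage $r$ no point in $[0,2^r]$ can ever be added to $Q$. To decide whether $x\in L$, run the construction until the first stage $r$ with $2^r\ge x$, and answer according to whether $x\in Q_r$. This decision is final, so $Q$ and hence $L$ are decidable. Therefore $G$ fails on some member of $\mathcal R_1$, and since $G$ was arbitrary, no total computable generator works for the whole family.
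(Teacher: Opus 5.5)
Your proposal is correct and follows the paper's proof essentially step for step. You bound the width by countability and \cref{thm:singleton}, rule out width zero using the cofinite members, use the minimum selector for individual targets, and rerun the \cref{prop:density-one} diagonal, getting decidability of $L$ by simulating through a stage $r$ with $2^r\ge x$ because every later newly forbidden point exceeds $2^r$.
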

\begin{proof}
There are only countably many decision programs, so \cref{thm:singleton}
gives width at most one. All cofinite sets belong to
$\mathcal R_1$, making its full intersection empty; the width is
therefore not zero. Every individual target is infinite and decidable,
so the minimum-selector construction applies to it.

Now suppose $G$ were a total computable generator for $\mathcal R_1$.
Carry out the construction in \cref{prop:density-one}. Every finite stage
is computable from $G$. Moreover, the resulting target $L$ is
decidable: given $x$, choose $r$ with $2^r\ge x$, simulate through
stage $r$, and test whether $x\in Q_r$. A point newly forbidden at
any later stage exceeds $2^r$, so
\[
 x\in L\quad\Longleftrightarrow\quad x\notin Q_r.
\]
Thus $L\in\mathcal R_1$, but the constructed text makes $G$ fail
infinitely often on $L$, a contradiction.
\end{proof}

This gives an effective counterpart to the quantifier distinction in
Appendix~\ref{app:quantifiers}. Individual decision procedures exist, while the
learner is not supplied with the unknown target's procedure. A countable
set of decidable languages need not come equipped with a uniform,
computable membership table covering its members.

\subsection{The role of oracle access}

The preceding results concern the stated no-oracle model.
In the membership-query model discussed by
\citet[Section~2.1, Remark~1]{charikar2024facets}, a learner may ask
whether a point belongs to a specified indexed language in the
collection. This does not identify which indexed language is the
unknown target. For a singleton collection, however, querying its sole
member is exactly access to target membership. Searching for the first
point of $L\setminus S$ then succeeds for every infinite $L$,
including immune targets. The obstruction in \cref{prop:immune}
is therefore compatible with results proved using membership queries.

The criterion in \cref{prop:computable-singleton} also has a precise
relative version: for any fixed oracle $A$, an infinite target has a
total $A$-computable generator if and only if it contains an infinite
$A$-c.e.\ subset. The proof relativizes each search and the normalization:
every candidate test uses finite data and calls to the given generator,
and the finite seed is still a constant. This keeps the oracle
fixed throughout, rather than granting access to the unknown target
silently.

\section{Formalization: design, correspondence, and lessons}
\label{app:formal}
The formal development makes explicit three interfaces in the argument:
from ordered histories to finite samples, from a mathematical search to
executable code, and from arbitrary witness assignments to finite counting
constraints. These interfaces help explain why the proofs work and which
parts can be reused. We describe the design choices, the clarifications
made when comparing the implementation with the manuscript, and the scope
of the resulting verification.
The additional density-one counterexample in
Appendix~\ref{app:quantifiers} and the computability statements in
Appendix~\ref{app:computability} are not included in this Lean development.

The public
\href{https://github.com/xiaoyulics/language-generation-characterization}{\texttt{language-generation-characterization}}
repository uses Lean~4 \citep{demoura2021lean4} and Mathlib
\citep{mathlib2020library}. It extends
\href{https://github.com/pengzhang91/generation-in-the-limit-lib}{\texttt{generation-in-the-limit-lib}},
developed and maintained by Shuangping Li and Peng Zhang. We retain their
four core definition files unchanged from upstream commit
\href{https://github.com/pengzhang91/generation-in-the-limit-lib/tree/de0d70c7e4645bface1d19bded9c8a5ade080fa8}{\texttt{de0d70c7}},
together with the Apache-2.0 license. The \texttt{FiniteWitness} modules are
our extensions; file-level provenance and credits are recorded in the
repository.

\subsection{Encoding the model without changing the question}
\label{app:formal-model}
A language is a \texttt{Set} of points, while a language class is a set of
such languages. This choice matters: indexing the class by natural numbers
would impose countability on the class being characterized. The main
theorem assumes a countably infinite point universe and that each target is
infinite; it places no countability assumption on the class and no
decidability assumption on target membership.

We use the upstream sequence-input definition directly. A generator takes
a length $t$ and an ordered history of that length. A positive text is a
stream whose range equals the target, so repetitions and arbitrary delays
are allowed. Correctness requires membership in the target and novelty
relative to the observed input; it does not require novelty relative to
earlier generated outputs. Finite words in the search are represented by
lists, and observed samples by \texttt{Finset}s. Lemmas connecting prefixes,
lists, and their underlying finite sets justify the changes of
representation. The existence of a set-driven generator is thus a proved
consequence of normalization; the original generator is allowed to depend
on order and repetitions.

The witness assignment is a function on all sets of points, with positivity
required only on members of the class. Values outside the class are
irrelevant. The definition \texttt{HasFiniteWitnesses} fixes this one
assignment before quantifying over all finite samples. It also retains the
nonemptiness guard on the active family. These details preserve both the
uniformity of the assignment and the empty-family cases of the paper.

\subsection{Normalization through a locking interface}
\label{app:formal-locks}
The main interface is \texttt{Locks g L}, which means
\[
 \exists T\in\fin{L}\quad
 \forall S\in\fin{L},\qquad
 T\subseteq S\ \Longrightarrow\ g(S)\in L\setminus S.
\]
It specifies correctness on every finite extension of a witness. This is
the property needed to pass from a generator to compatible witnesses.
For example, suppose the full intersection $C$ of a nonempty active family
at $S$ were finite. Then $S\subseteq C$, and applying every active target's
locking property to the finite sample $C$ would force $g(C)$ to lie both
inside and outside $C$. The lemma \texttt{locks\_imply\_finiteWitnesses}
encapsulates this argument. Conversely, an infinite active intersection
permits a fresh choice, yielding \texttt{finiteWitnesses\_imply\_locks}.
The normalization proof can therefore be developed independently of the
later width theory.

The quantifier order is essential. In the \texttt{Simplified} namespace,
\texttt{universal\_normalization} has the form
\[
 \forall G\ \exists g\ \forall L\in\infinite{\X},\qquad
 \bigl(G\text{ succeeds on every text for }L\bigr)
 \Longrightarrow\ \texttt{Locks}\ g\ L.
\]
Its proof constructs $g$ before introducing $L$. A separate construction
for each target would not supply the common generator required by the
characterization. The target-dependent objects enter only in the proof
that this already fixed $g$ locks.

We isolate the checkpoint mechanism in a structure \texttt{Checkpoints}.
The first-$k$ points in the chosen order provide an instance, but the
normalization proof uses only the properties in
\cref{tab:formal-checkpoints}. This identifies exactly what another
checkpoint construction would have to establish.

\begin{table}[H]
\centering
\small
\renewcommand{\arraystretch}{1.16}
\begin{tabular}{@{}p{.44\linewidth}p{.52\linewidth}@{}}
\toprule
Checkpoint property & Role in the proof \\
\midrule
$E_k(A)\subseteq A$ for every $A\subseteq\X$
& Checkpoints introduce only legal input points. \\
$|E_k(L)|=k$ for infinite $L$
& Seeing $E_{m+1}(L)$ guarantees at least $m+1$ search rounds. \\
Every $x\in L$ occurs in some $E_k(L)$
& An infinite chain of true errors would form an exhaustive text. \\
If $E_k(L)\subseteq S\subseteq L$, then
$E_j(S)=E_j(L)$ for $j\le k$
& The sample search and target-based error chain use the same checkpoints. \\
\bottomrule
\end{tabular}
\caption{The checkpoint interface used by the normalization proof.
Here $L$ is infinite and $S$ is finite. The first-$k$ construction is the
one in \cref{sec:normalization}.}
\label{tab:formal-checkpoints}
\end{table}

The proof then compares two recursions. The proof-only \texttt{trueRun}
uses membership in $L$ to extend a chain of genuine errors, while
\texttt{sampleRun} uses only the finite sample and the repaired generator.
The lemma \texttt{trueRun\_stops} rules out an infinite error chain.
The lemma \texttt{sampleRun\_matches\_true} shows that a sufficiently
informative sample makes the two histories agree through the final true
error. Its induction exposes the two reasons a smaller-code candidate
cannot interfere: a valid output has already been confirmed in the
sample, and an invalid output would contradict the minimality of the
true error.

One useful boundary case is $m=0$, when the true error chain stops at the
empty history. Terminal safety concerns strict extensions; it need not
make the output at the empty history correct. Including $E_{m+1}(L)$ in
the witness resolves this case as well as the general one: every sample
containing the witness has size at least $m+1$, so the normalized run
performs a strict extension beyond the terminal history. The cardinality
field of \texttt{Checkpoints} is thus a substantive part of the stopping
argument.

\subsection{From least-code search to executable code}
\label{app:formal-executable}
The mathematical search chooses the least code satisfying a predicate on
all finite words. We first formalize that specification, then give a
finite implementation on the natural-number universe and prove equality
with it. The key is to produce a candidate before searching for the best
candidate.

Let $F$ be the fresh repair, let $S$ be a nonempty sample, and suppose the
current history $p$ contains only points of $S$. Appending the sorted
sample gives
\[
 q_0=p\mathbin{{}^\frown}\operatorname{sort}(S),
 \qquad b=\code(q_0).
\]
This is a strict extension whose content is exactly $S$, so it contains
every required sample checkpoint and $F(q_0)\notin S$. Hence the globally
least candidate has code at most $b$. The executable definition decodes
the finitely many codes through $b$, filters by the candidate predicate,
and selects the least remaining code. Every test uses finite data and a
call to $F$; the target and its witness are absent from the computation.

The refinement is checked in stages. The lemma
\texttt{pick\_candidate\_eq} proves equality of one search step with the
least-code specification. Induction gives \texttt{executableRun\_eq},
and \texttt{executableNormalized\_eq} includes the empty sample, on which
both definitions perform zero rounds. The theorem
\texttt{executable\_universal\_normalization} then supplies the locking
conclusion. These equalities concern fixed, matching choices of encoding,
checkpoints, and fresh repair; they do not assert that all possible
normalizations produce identical outputs.

This supplies executable Lean definitions and checked correctness for the
finite-search argument in \cref{cor:computable}. The development does not
include a separate Mathlib \texttt{Computable}/\texttt{Partrec} theorem or
a verified compiler to Turing machines. It also gives no efficiency or
convergence bound. In particular, the older implementation retained in the
repository queries only words of length at most $2|S|$, whereas the present
implementation uses the adaptive code bound $b$ above. Their common
locking specification does not transfer that earlier query-length bound
to the revised algorithm.

\subsection{Preserving the scope of the width hierarchy}
\label{app:formal-width}
For the width values, the implementation uses
\texttt{WithTop (WithTop Nat)}. The inner added top represents $\omega$;
the outer one represents $\omega+1$. This supplies the ordered scale
$0,1,2,\ldots,\omega,\omega+1$ needed here without introducing general
ordinal arithmetic. The width is defined by the three cases in the paper.
The theorem \texttt{width\_isLeast\_cost} separately proves that this
definition agrees with the attained minimum of assignment costs,
including set-valued assignments and the empty class. Thus the convenient
representation is connected to the optimization meaning of width by a
theorem.

The finite lower bound requires particular care about the objects over
which it quantifies. Witnesses may depend on the entire target and may
contain arbitrary tail points. Restricting them to a fixed anchor pattern
would prove a weaker statement. The formal argument first uses capture to
derive the anchor-cover constraint for an arbitrary bounded assignment.
Only then does \texttt{Anchored.lower\_incidence} reduce the problem to
counting incidences in finite rows and columns. The final arithmetic is
short; the preceding reduction is what makes it apply to every admissible
assignment.

The direct capture proof also benefits from an explicit invariant. A
\texttt{DiagonalState} records a finite set $I$ of retained indices, a
finite set $B$ of forbidden points, and an infinite pool $P$ of future
indices, with
\[
 U_i\cap B=\varnothing\quad(i\in I),
 \qquad
 U_n\cap B=\varnothing\quad(n\in P).
\]
At the next stage, choose $d+1$ points of the next infinite core outside
the previously retained sets. Since every $U_n$ has at most $d$ points,
one candidate is omitted by infinitely many indices in $P$. Adding that
point to $B$ and restricting $P$ preserves safety both for past and for
future choices. A retained index larger than the stage makes the final
set of retained indices unbounded, hence infinite. This is slightly less
restrictive than choosing a strictly increasing sequence, and is enough
for capture. The theorem \texttt{bounded\_capture\_indexed} keeps the
point universe arbitrary: only the family of cores to be avoided must be
countable.

\subsection{What statement correspondence clarified}
\label{app:formal-lessons}
An instructive correction concerned the confirming witness in
\eqref{eq:normalization-witness}. The code takes one maximum of the
relevant history codes and collects all valid outputs below a common
finite bound. This gives a finite superset of the confirming outputs
listed in the displayed formula. The theorem
\texttt{exists\_confirming\_witness} proves the containments needed by
the proof, not literal equality with that formula. Comparing the
manuscript with the source led us to replace a comment describing an
``exact witness union'' with ``a sufficient finite confirming witness.''
The proof term did not change. This distinction is harmless for existence,
because enlarging a locking witness within the target preserves locking;
it would matter in an argument claiming a sharp witness-size bound.

The revised normalization raised a different correspondence issue. The
earlier proof used ambient-code checkpoints and a word-length cutoff.
Replacing them with first-$k$ checkpoints and exactly $|S|$ rounds
changes the algorithm. We therefore added the \texttt{Simplified} modules,
proved the first-$k$ interface, and checked the revised search separately.
The earlier theorem remains useful evidence for its own construction;
the new construction is supported by its own correctness and refinement
theorems. The direct diagonal capture proof was likewise checked
separately. Earlier hierarchy proof terms still invoke the older capture
proof by induction on the size bound; we do not identify those proof
terms with the new diagonal argument.

Other differences are recorded explicitly in the repository's
\href{https://github.com/xiaoyulics/language-generation-characterization/blob/0b3178f06d2285e8f2a7fa363c0f1732adea111e/docs/STATEMENT_MAP.md}{statement map}.
The finite upper bound uses a balanced two-block assignment in Lean and
a cyclic allocation in the text. The all-infinite endpoint uses capture
and the characterization in Lean, and a direct generator diagonal in the
text. The conclusion of \cref{cor:countable} follows from the checked,
stronger singleton theorem; its elementary index-based proof is not a
separate formal implementation. These are distinctions between verifying
a conclusion and verifying a particular proof or algorithm. Recording
them makes the relationship between paper and code inspectable.

\subsection{Checking and reusing the development}
\label{app:formal-reuse}
The mathematical release is pinned at commit
\href{https://github.com/xiaoyulics/language-generation-characterization/tree/0b3178f06d2285e8f2a7fa363c0f1732adea111e}{\texttt{0b3178f0}},
using Lean~4.24.0 and Mathlib commit
\href{https://github.com/leanprover-community/mathlib4/tree/f897ebcf72cd16f89ab4577d0c826cd14afaafc7}{\texttt{f897ebcf}}.
From that checkout, the build commands are
\begin{quote}
\small
\begin{verbatim}
cd GenLimitLean
lake exe cache get
lake build
\end{verbatim}
\end{quote}
The recorded successful build covers 44 local Lean source files; its
input manifest binds those files and three configuration files to their
hashes. The audit entry points print 41 theorem axiom closures, whose
union contains only \texttt{propext}, \texttt{Classical.choice}, and
\texttt{Quot.sound}. Classical choice is used in the information-theoretic
development, including choices from active intersections. Its presence
does not supply an algorithm for recovering witnesses from an arbitrary
class. The
\href{https://github.com/xiaoyulics/language-generation-characterization/blob/0b3178f06d2285e8f2a7fa363c0f1732adea111e/docs/CHECKING.md}{checking guide}
links the build record, provenance, and audit declarations.

The library also covers positive-separation equivalence, singleton
witnesses, the complete width hierarchy and its restriction law, chain
divergence, the dimension and local-profile obstructions, padding
collapse, EUC consequences, and the sorting counterexample. For reuse,
the principal entry points are
\begin{quote}
\small
\begin{verbatim}
import GenLimit.FiniteWitness.Simplified
import GenLimit.FiniteWitness.Width
\end{verbatim}
\end{quote}
A new normalization can target \texttt{Locks} and reuse the passage to
finite witnesses. A new checkpoint scheme can instantiate
\texttt{Checkpoints} and reuse the generic normalization proof. For a new
language class, proving \texttt{HasBoundedWitnesses H d} yields a width
upper bound through \texttt{width\_le\_finite\_iff}; proving optimality
requires excluding every assignment of smaller size, including witnesses
that depend on unrestricted parts of the target.

In this AI-assisted development, checker acceptance, statement
correspondence, and mathematical interpretation have distinct roles.
The build checks the encoded propositions under the declared axioms;
the statement map connects them to the manuscript; and the arguments
explain what the resulting notions measure. Keeping these connections
explicit lets readers check and extend the mathematics without treating
a successful build as evidence for claims about novelty, efficiency, or
an implementation that was never checked.

\end{document}